\documentclass[lettersize,journal]{IEEEtran}
\usepackage{amsmath,amsfonts}
\usepackage{algorithmic}
\usepackage{algorithm}
\usepackage{array}
\usepackage[caption=false,font=normalsize,labelfont=sf,textfont=sf]{subfig}
\usepackage{textcomp}
\usepackage{stfloats}
\usepackage{url}
\usepackage{verbatim}
\usepackage{graphicx}
\usepackage{cite}
\usepackage{bm}

\usepackage{mathtools}
\usepackage{amsthm}
\newtheorem{theorem}{Theorem}
\newtheorem{proposition}{Proposition}
\theoremstyle{definition}
\newtheorem{definition}{Definition}
\newtheorem{remark}{Remark}
\newtheorem{assumption}{Assumption}
\begin{document}
%%%%%%%%%%%%%%%%%%%%%%%%%%%%%%%%%%%%%%%%%%%%%%%%%%%%%%%%%%%%%%%%%%%%%%%%%%%%%%%%%%%%%%%%%%%%%%%%%%%%%%%%%%%%%%%%%%%%%%%%%%%%%%%%%%%%%%

\title{Characterization and Analysis of Emergency Landing Flight Envelopes with Graded Safety Specifications}

% \title{Emergency Landing Flight Envelope Analysis\\with a Graded Notion of Safety}

% \author{Chams~E.~Mballo,
%         Bryce~L.~Ferguson,
%         Inkyu~Jang,
%         Donggun~Lee,
%         and Claire~J.~Tomlin%
% \thanks{C. E. Mballo and C. J. Tomlin are with the Department of Electrical Engineering and Computer Sciences, University of California, Berkeley, CA 94720 USA.}
% \thanks{B. L. Ferguson is with the Thayer School of Engineering, Dartmouth College, Hanover, NH 03755 USA.}
% \thanks{I. Jang is with the Department of Aerospace Engineering, Seoul National University, Seoul 08826, Republic of Korea.}
% \thanks{D. Lee is with the Department of Mechanical and Aerospace Engineering, North Carolina State University, Raleigh, NC 27695 USA.}
% \thanks{Corresponding author: C. E. Mballo (email: cmballo@berkeley.edu).}}

\author{Chams~Eddine~Mballo, Bryce~L.~Ferguson, Inkyu~Jang, Donggun~Lee,
and Claire~J.~Tomlin%
\thanks{Manuscript submitted May 25, 2026. The work of Chams Eddine Mballo was supported by the University of California President's Postdoctoral Fellowship Program.}%
\thanks{Chams Eddine Mballo is with the Department of Electrical Engineering and Computer Sciences, University of California, Berkeley, CA 94720 USA (e-mail: cmballo@berkeley.edu).}%
\thanks{Bryce L. Ferguson is with the Thayer School of Engineering, Dartmouth College, Hanover, NH 03755 USA (e-mail: bryce.l.ferguson@dartmouth.edu).}%
\thanks{Inkyu Jang is with the Department of Aerospace Engineering, Seoul National University, Seoul 08826, Republic of Korea (e-mail: leplusbon@snu.ac.kr).}%
\thanks{Donggun Lee is with the Department of Mechanical and Aerospace Engineering, North Carolina State University, Raleigh, NC 27695 USA (e-mail: donggun\_lee@ncsu.edu).}%
\thanks{Claire J. Tomlin is with the Department of Electrical Engineering and Computer Sciences, University of California, Berkeley, CA 94720 USA (e-mail: tomlin@eecs.berkeley.edu).}%
\thanks{Corresponding author: Chams Eddine Mballo.}%
}

%\thanks{Manuscript received Month Day, Year; revised Month Day, Year.}}

% The paper headers
\markboth{}%
{Mballo \MakeLowercase{\textit{et al.}}: Emergency Landing Flight Envelope Computation under Soft Safety Constraints}

%%%%%%%%%%%%%%%%%%%%%%%%%%%%%%%%%%%%%%%%%%%%%%%%%%%%%%%%%%%%%%%%%%%%%%%%%%%%%%%%%%%%%%%%%%%%%%%%%%%%%%%%%%%%%%%%%%%%%%%%%%%%%%%%%%%%%%

% \title{Emergency Landing Flight Envelope Computation under Soft Safety Constraints}

% \author{IEEE Publication Technology,~\IEEEmembership{Staff,~IEEE,}
%         % <-this % stops a space
% \thanks{This paper was produced by the IEEE Publication Technology Group. They are in Piscataway, NJ.}% <-this % stops a space
% \thanks{Manuscript received April 19, 2021; revised August 16, 2021.}}

% % The paper headers
% \markboth{IEEE TRANSACTIONS ON CONTROL SYSTEMS TECHNOLOGY,~2026}%
% {Shell \MakeLowercase{\textit{et al.}}: A Sample Article Using IEEEtran.cls for IEEE Journals}

% IEEE TRANSACTIONS ON CONTROL SYSTEMS TECHNOLOGY, 2026

% \IEEEpubid{0000--0000/00\$00.00~\copyright~2021 IEEE}
% % Remember, if you use this you must call \IEEEpubidadjcol in the second
% % column for its text to clear the IEEEpubid mark.

\maketitle

% Emergency landing flight envelope analysis is traditionally based on a binary notion of safety, in which aircraft states are classified as either safe or unsafe. In practice, however, safety is inherently graded, spanning a continuum in the state space from the nominal regime, the aircraft’s intended mode of operation, through the degraded regime, an undesirable yet temporarily tolerable condition, and ultimately to the critical regime, a hazardous and operationally unacceptable condition. 

\begin{abstract}

Emergency landing flight envelope analysis traditionally adopts a binary notion of safety, whereby a trajectory is safe only if state constraints are satisfied pointwise in time. In practice, ensuring a successful landing requires recognizing that aircraft operation spans a continuum in the state space from the nominal to the critical regime. Between these regimes lies a degraded regime of states outside nominal operation that may be visited only for limited durations. Safety is therefore inherently graded, in the sense that limited exposure to degraded states may be tolerated, and must be assessed using a trajectory-dependent criterion rather than a purely pointwise-in-time one. This paper develops a Hamilton-Jacobi reachability framework for analyzing emergency landing flight envelopes under this graded notion of safety. Safety is encoded through a soft constraint defined by a designer-specified continuous violation cost function that assigns zero cost in the nominal regime and larger cost to more safety-critical off-nominal states. We introduce a general class of state- and time-dependent violation cost functions and establish monotonicity and continuity properties that characterize how the flight envelope varies with the cost of off-nominal operation. These results provide a principled sensitivity analysis linking safety conservativeness to operational capability. Building on this analysis, we propose a synthesis algorithm for parameterized violation cost functions in this class. The algorithm provably converges to the least conservative parameter under which a prescribed off-nominal safety requirement is satisfied. Numerical results for a fixed-wing emergency landing scenario under propulsion failure demonstrate the sensitivity properties and validate the algorithm.

\end{abstract}

\begin{IEEEkeywords}
Hamilton-Jacobi reachability, optimal control, safety verification, safety-critical control, soft constraints, emergency landing, flight envelope, aerospace systems.
\end{IEEEkeywords}

\section{Introduction}
\IEEEPARstart{S}{afety} in engineered control systems is often framed as the problem of characterizing a subset of the state space from which a system can be steered to its goal while avoiding unsafe states. Such characterizations typically account for uncertainty~\cite{c1, c2}. In aircraft operations, a prototypical example of this subset is the flight envelope~\cite{Tekles2017, Tang2009, Yavrucuk2009}, which specifies the set of states in which aircraft operation is {\em safe}, namely structural, aerodynamic, propulsion, and control authority limits are satisfied. In a propulsion failure scenario, one considers an emergency landing envelope, defined as the set of states from which a safe landing can still be achieved; see, e.g.,~\cite{Akametalu}. Determining this emergency landing envelope is naturally formulated as a robust reachability problem: one computes the set of initial conditions from which the aircraft can be guided to touchdown in the presence of disturbances while satisfying safety constraints.

In practice, flight envelopes are established primarily through flight testing~\cite{Walgemoed2005FlightEnvelope}. Through this process, limits on critical flight variables, commonly referred to as limit parameters~\cite{Unnikrishnan2011ReactionaryEnvelopeProtection}, are determined and used to characterize the flight envelope. With the advent of fly-by-wire systems, advanced flight control architectures have enabled carefree maneuvering through automatic flight envelope protection~\cite{Falkena2011,Yavrucuk2002AdaptiveLimitMarginCueing}. Complementary pilot cueing approaches have also been developed to alert or guide pilots near envelope limits, including tactile cues transmitted through the control inceptor, as investigated at NASA Ames~\cite{Whalley1994,WhalleyHindsonThiers2000}. For aircraft equipped with such technologies, showing compliance with certification requirements includes flying-qualities assessments to demonstrate that the added automation preserves task performance and maintains acceptable handling qualities without imposing excessive pilot workload~\cite{Lotterio2022AdvancedFlightControls,Klyde2020HQTE}.

Classical reachability formulations offer a complementary computational way to characterize such envelopes. These formulations typically adopt a binary notion of safety~\cite{c3,MargellosLygeros2011}: an initial state is classified as safe or unsafe according to whether there exists a feedback controller that can steer the aircraft from that state to touchdown while satisfying state constraints pointwise in time. In practice, however, safety in aircraft and many engineered systems is inherently graded, spanning a spectrum from \emph{nominal} (intended operating conditions), through \emph{degraded} (undesirable but tolerable for limited duration under exceptional circumstances), to \emph{critical} (poses an unacceptable hazard and is therefore operationally inadmissible). For example, during an emergency landing, the pilot may operate certain aircraft components slightly outside their nominal operating range, such as under higher structural and aerodynamic loads, to obtain the maneuvering capability required for a safe landing. Such operating conditions increase component wear and reduce safety margins without causing immediate component failure, and correspond to operating in the degraded region. This graded notion of safety highlights a fundamental trade-off in achieving a safe landing: prohibiting any excursions beyond the nominal region can be unsafe due to over-conservatism, whereas prolonged operation in degraded regimes increases the risk of system failure. Thus, achieving a safe emergency landing hinges on a careful balance between these two regimes of operation.

% This graded notion of safety highlights a fundamental trade-off in achieving a safe landing: prohibiting any excursions beyond the nominal region, especially in emergencies, can be unsafe due to over-conservatism, whereas prolonged operation in degraded regimes poses its own safety risks and can lead to system failure. 

The soft-constrained reachability formulation in~\cite{mballo} takes a first step toward systematically formalizing this graded notion of safety. In particular, it enforces staying within the nominal operating region as a soft safety constraint within the Hamilton-Jacobi (HJ) reachability framework. This is achieved by introducing an indicator-type violation cost function $\ell$ that measures the time the system spends in degraded regimes, together with a finite violation time budget that bounds the allowable cumulative time in these regimes. The resulting soft-constrained reach-avoid set encodes, for a given choice of $\ell$, the initial conditions from which a safe landing can be guaranteed while respecting both hard constraints (e.g., no operation in the critical region) and the prescribed budget on off-nominal operation. However, how $\ell$ should be designed for a given setting is not well understood, in part because the treatment in~\cite{mballo} is restricted to a simple indicator cost function. While appropriate for establishing a first-principles formulation of HJ reachability with soft safety constraints, this restriction precludes violation costs that depend on both the duration of off-nominal operation and its severity (i.e., the states visited during such excursions). In practice, the appropriate cost function should be informed by aircraft type and health condition. A richer class of violation cost functions, potentially expressed as a parameterized family $\ell_{\lambda}$, is therefore needed. A remaining open question is how the choice of $\ell$ (and, when parameterized, $\lambda$) influences the resulting flight envelope. As a result, there is little guidance on how to select $\ell$ in a way that is tied to crucial indicators of aircraft condition (e.g., structural health) and that it achieves a desired trade-off between operating in the nominal and degraded regimes. For instance, remaining component fatigue life can inform the conservativeness of $\ell$. An aircraft whose components are close to reaching their fatigue life calls for a more conservative $\ell$ that heavily penalizes operation in degraded regimes. By contrast, for an aircraft whose components have substantial remaining fatigue life, a less conservative $\ell$ is acceptable, since more sustained off-nominal operation can be tolerated.

% Although~\cite{mballo} primarily focuses on establishing the theoretical framework and illustrates one specific choice of $\ell$, it does not provide a systematic characterization of how the choice of $\ell$ and its parameters reshapes the emergency landing flight envelope. As a result, there is still little guidance on how to select or tune~$\ell$ in a way that is tied to crucial indicators of the aircraft’s condition (e.g., structural health) and that helps achieve a desired trade-off between operating in the nominal and degraded regimes. 

From the perspective of aircraft designers, operators and certification authorities, this leaves two key design questions largely unaddressed: \emph{(i)} how the emergency landing flight envelope changes as $\ell$ is varied across a spectrum of conservativeness, and \emph{(ii)} given a parameterized violation cost function, how can one tune its parameter to satisfy an explicit safety requirement on off-nominal operation while retaining as much operational capability (e.g., maximum safe landing altitude) as possible. The first question is fundamentally about sensitivity, whereas the second concerns synthesis, namely how to translate these sensitivity insights into a concrete design choice for~$\ell$.

To address these questions, this paper builds on the soft-constrained HJ reachability framework of~\cite{mballo} to compute the emergency landing flight envelope for a fixed-wing aircraft point-mass model under propulsion failure. We consider a one-parameter family of soft-constraint violation cost functions $\{\ell_\lambda\}_{\lambda \ge 0}$, without {\em a priori} specifying a particular functional form, to cover a broad class of violation cost functions. The parameter $\lambda$ controls the conservativeness of $\ell$. We address question \emph{(i)} above by characterizing how the envelope varies with $\lambda$, and \emph{(ii)} by leveraging this sensitivity analysis to derive a tuning procedure that selects $\lambda$ based on specified safety and performance requirements for the aircraft.

\subsection{Relevant Literature}
% Our methodology draws on insights from several research threads.

To situate our contribution, we briefly review two areas most relevant to this paper: flight envelope analysis and protection, including HJ reachability-based approaches, and soft-constrained methods in control and reinforcement learning.

Flight envelope analysis and protection for damaged or faulty aircraft is challenging because faults can alter aerodynamic coefficients and shift the center of gravity, leading to substantial changes in the aircraft dynamics that are difficult to predict {\em a priori}. Several works have used tools from optimal control, notably HJ reachability analysis, to determine operational flight envelopes and compute emergency landing flight envelopes under failure scenarios such as engine-out~\cite{Akametalu}, structural damage or failure~\cite{Lombaerts2013, Nabi2018}, partial loss of thrust~\cite{Schuet2017}, and actuator faults~\cite{Tang2009}. Other works have used HJ reachability analysis to synthesize envelope-protection logics that keep the aircraft inside the flight envelope, thereby preventing loss-of-control events~\cite{Schuet2017, Hsu}. Beyond flight envelope applications, HJ-based methods have also been used to compute reachable sets and associated guidance strategies that steer a fixed-wing aircraft with a failed engine to a safe landing site~\cite{Akametalu}, and to solve reach-avoid problems arising in the design of autolander systems for hybrid aircraft models~\cite{bayen}. HJ reachability has also been applied in air traffic management, where collision avoidance is formulated as a differential game and solved via backward reachable-set computation to provide safety guarantees for aircraft separation at high altitude~\cite{c3}. Outside aerospace, HJ reachability has seen widespread use in robotics and autonomous systems for motion planning and safety-critical control~\cite{Fisac}. Across these works, safety is treated as binary: states are classified as either safe or unsafe, without distinguishing among nominal, degraded, and critical regimes.

Optimization problems with soft constraints have been widely studied in control and reinforcement learning, particularly within frameworks such as Constrained Markov Decision Processes (CMDPs)~\cite{altman, Russel2020}, Model Predictive Control (MPC)~\cite{zelinger, wabersich}, and Control Barrier Functions (CBFs)~\cite{lee, xiao}. These methods introduce soft constraints to trade off safety and performance, typically by relaxing hard constraints and penalizing the resulting violations using slack variables or auxiliary cost terms~\cite{altman, Russel2020, zelinger, wabersich, lee, xiao, Chow2017, Tessler2019, mpc_soft_constraints1, mpc_soft_constraints2, mpc_soft_constraints5}. Although effective for synthesizing constraint-aware policies, these methods do not explicitly compute reachable sets such as reach-avoid sets. Consequently, they do not provide the set-based safety certificates afforded by HJ reachability, where the reachable set is given by a value function’s zero sublevel set. Moreover, these soft-constraint formulations have not been used to compute aircraft operational flight envelopes.

\subsection{Contributions of this paper}

We summarize the contributions of the paper as follows:

\begin{itemize}

% \item We introduce a general class of state- and time-dependent violation cost functions that extends the soft-constrained framework of~\cite{mballo}. This generalization enables a systematic sensitivity analysis over parameterized families from this class, characterizing how the emergency landing flight envelope changes as the cost assigned to degraded operation is varied.

\item We introduce a general class of state- and time-dependent violation cost functions that extends the soft-constrained framework of~\cite{mballo} beyond indicator-type violation costs. This generalization enlarges the admissible function space for encoding soft safety constraints, enabling the modeling of varying degrees of tolerance to off-nominal operation. It also enables a systematic sensitivity analysis over parameterized families of violation costs. This analysis characterizes how the emergency landing flight envelope varies as the cost assigned to degraded operation is tuned.

% \item We prove a general monotonicity result: if one violation cost function is more conservative than another, in the sense that it assigns no smaller cost to any state at any time, then the emergency landing flight envelope induced by the more conservative violation cost function is contained in that of the less conservative one. We also establish that the flight envelope set changes continuously with the cost function's parameters. Finally, we show that suitable parameterized families in this class recover, in the limit as the parameter tends to infinity, either the soft-constrained formulation of~\cite{mballo} or the classical hard-constrained reach-avoid set (which, in our setting, corresponds to an emergency landing flight envelope with a binary notion of safety), thereby providing a unified treatment of existing formulations.

\item We prove a general monotonicity result: if one violation cost function is more conservative than another, in the sense that it assigns no smaller cost to any state at any time, then the emergency landing flight envelope induced by the more conservative violation cost function is contained in that of the less conservative one. We also establish continuity of the flight envelope with respect to the cost function parameter. Finally, we show that suitable parameterized families in this class recover, in the limit as the parameter tends to infinity, either the soft-constrained formulation of~\cite{mballo} or the classical hard-constrained reach-avoid set (which, in our setting, corresponds to an emergency landing flight envelope with a binary notion of safety), thereby providing a unified treatment of existing formulations.

\item We develop a synthesis procedure that, for a given functional form of $\ell$, chooses the violation-cost parameter to meet a prescribed safety requirement on off-nominal operation while maximizing operational capability. Leveraging results from the sensitivity analysis, we reduce the design problem to a one-dimensional bracketing search and provide an algorithm with guaranteed convergence to the optimal parameter value.

\item We demonstrate the sensitivity results and synthesis algorithm in a fixed-wing emergency landing case study under propulsion failure in the presence of exogenous disturbances.

\end{itemize}

\subsection{Outline}

The remainder of the paper is organized as follows. Section~\ref{sec:problem} provides a formal definition of the emergency landing flight envelope. Building on this definition, Section~\ref{Soft_Constrained_VF} presents the soft-constrained HJ reachability framework used to characterize and compute the envelope. Section~\ref{sec:sensitivity} addresses question~\emph{(i)} via a sensitivity analysis over the one-parameter family of violation cost functions \(\{\ell_\lambda\}_{\lambda \ge 0}\). Question~\emph{(ii)} is addressed in Section~\ref{tuning}, which derives an algorithm for selecting \(\lambda\) to meet prescribed safety and performance requirements. Section~\ref{sec:landing_app} illustrates the practical implications of Sections~\ref{sec:sensitivity} and~\ref{tuning} using a fixed-wing emergency-landing scenario. Finally, Section~\ref{sec:conclusion} concludes the paper and discusses directions for future work.

\section{Problem Formulation}
\label{sec:problem}

In this section, we formulate emergency landing under propulsion failure as a robust reach-avoid problem with soft safety constraints.

\subsection{Robust Reachability as a Zero-Sum Game}\label{subsec:ra_zero_sum_game}

We model the aircraft under propulsion failure with time-invariant dynamics
\begin{equation}
  \dot{\mathrm{x}}(s)
  = f\bigl(\mathrm{x}(s),\bm a(s),\bm b(s)\bigr),
  \quad s\in[0,T],\quad \mathrm{x}(0)=x,
  \label{eq:ode}
\end{equation}

\noindent where we set \(s=0\) to be the time of the propulsion failure, \(x\in\mathbb{R}^n\) denotes the state at failure, and \(\mathrm{x}(\cdot):[0,T]\to\mathbb{R}^n\) is the resulting post-failure trajectory. The terminal time $T>0$ represents an upper bound on the allowable landing time: the aircraft must reach a safe landing site at some $\tau\in[0,T]$. The input function \(\bm a(\cdot)\) represents the post-failure control command, generated by the pilot/autopilot through the remaining control authority (e.g., remaining aerodynamic control surfaces), while \(\bm b(\cdot)\) models exogenous disturbances and unmodeled dynamics (e.g., wind gusts).

Let \(\mathbb{A}\subset\mathbb{R}^{m_a}\) and \(\mathbb{B}\subset\mathbb{R}^{m_b}\) be nonempty, compact input sets (that is, closed and bounded), and define the spaces of measurable signals 
\(
\mathcal{A}\coloneqq\{\bm a:[0,T]\!\to\!\mathbb{A}\ \big|\;\|\bm a\|_{L^\infty(0,T)} < \infty\},~
\mathcal{B}\coloneqq\{\bm b:[0,T]\!\to\!\mathbb{B}\ \big|\;\|\bm b\|_{L^\infty(0,T)} < \infty\}.
\) We assume that \(f:\mathbb{R}^n\times\mathbb{A}\times\mathbb{B}\to\mathbb{R}^n\) is bounded and uniformly continuous, and Lipschitz in the state vector uniformly in \((a,b) \in \mathbb{A} \times \mathbb{B}.\)

In this setting, Player~\(\mathrm{A}\) (the pilot/autopilot) selects \(\bm a\in\mathcal{A}\), while Player~\(\mathrm{B}\) (the disturbance) reacts via a nonanticipative strategy \(\delta\in\Delta\), yielding \(\bm b=\delta[\bm a]\), where
\begin{align}
\Delta &\coloneqq \{\delta:\mathcal{A}\!\to\!\mathcal{B} \mid \forall s\!\in\![0,T],\, \bm a,\bm{\bar a}\!\in\!\mathcal{A},\notag\\[-2pt]
&\quad \bm a\equiv\bm{\bar a}\ \text{a.e.\ on }\![0,s]\Rightarrow
\delta[\bm a]\equiv\delta[\bm{\bar a}]\ \text{a.e.\ on }[0,s]\}.
\label{eq:nonanticipative}
\end{align}

\noindent This nonanticipativity condition ensures that Player~\(\mathrm{B}\) cannot use Player~\(\mathrm{A}\)’s future inputs when choosing its current input. We study a zero-sum interaction in which Player~\(\mathrm{A}\) aims to guarantee a safe landing, whereas Player~\(\mathrm{B}\) acts adversarially.

Under the aforementioned assumptions, for any $\bm a\in\mathcal{A}$ and $\delta\in\Delta$, the induced pair $(\bm a,\delta[\bm a])$ generates a unique trajectory $\phi^{\bm a,\delta[\bm a]}_{x}(\cdot):[0,T]\to\mathbb{R}^n$ solving~\eqref{eq:ode} almost everywhere on $[0,T]$~\cite{c16}. In particular, if two control signals $\bm a,\bm{\bar a}\in\mathcal{A}$ agree almost everywhere on $[0,T]$, then under the nonanticipative strategy $\delta\in\Delta$ the corresponding disturbance signals $\delta[\bm a]$ and $\delta[\bm{\bar a}]$ also agree almost everywhere, and the resulting pairs $(\bm a,\delta[\bm a])$ and $(\bm{\bar a},\delta[\bm{\bar a}])$ induce the same trajectory. Thus, the nonanticipativity condition is naturally imposed up to almost-everywhere equivalence rather than pointwise equality, since differences on sets of measure zero do not affect the trajectory. Moreover, because admissible signals are only assumed to be measurable, the resulting trajectory need not be differentiable at every time; instead, it is understood in the Carathéodory sense, that is, as an absolutely continuous function satisfying the differential equation almost everywhere on $[0,T]$.

\subsection{Hard and Soft Safety Constraints}\label{subsec:Constraints}

To model the graded nature of safety, we consider nonempty, compact sets $\mathbb{C}_{2}\subseteq\mathbb{C}_{1}\subset\mathbb{R}^n$. The set $\mathbb{C}_{1}$ represents the admissible region of operation, enforced as a hard safety constraint. Its complement $\mathbb{C}_{1}^{\mathrm{c}}$ is the critical region in which the aircraft is not allowed to operate. The set $\mathbb{C}_{2}$ represents the nominal operating regime and is enforced as a soft safety constraint.

Accordingly, along the trajectory $\phi^{\bm a,\delta[\bm a]}_{x}(s)$ we
classify the operating condition as
\begin{equation}\label{eq:nom-deg-crit}
\begin{cases}
\phi^{\bm a,\delta[\bm a]}_{x}(s)\in\mathbb{C}_{2}
& \text{(nominal)}\\[2pt]
\phi^{\bm a,\delta[\bm a]}_{x}(s)\in\mathbb{C}_{1}\setminus\mathbb{C}_{2}
& \begin{array}{@{}l@{}}
\text{(degraded; permitted but}\\[-0.9ex]
\hspace{0.7 em}\text{incurs a penalty)}
\end{array}\\[2pt]
\phi^{\bm a,\delta[\bm a]}_{x}(s)\notin\mathbb{C}_{1}
& \text{(critical; forbidden).}
\end{cases}
\end{equation}
admissible region of operation

\noindent Trajectories are required to satisfy a hard state constraint by remaining in the admissible region of operation $\mathbb{C}_{1}$. Within $\mathbb{C}_{1}$, brief excursions from the nominal region $\mathbb{C}_{2}$ into the degraded region $\mathbb{C}_{1}\setminus \mathbb{C}_{2}$ are permitted but penalized. To quantify the cumulative cost associated with operating in $\mathbb{C}_{1}\setminus \mathbb{C}_{2}$, we introduce a nonnegative violation cost function $\ell : [0,T]\times\mathbb{R}^n \to [0,\infty)$ that specifies the instantaneous cost of operating in the degraded region as a function of time and state. We assume that \(\ell\) is bounded, continuous in \(s\), and Lipschitz continuous in \(x\), uniformly over \(s\in[0,T]\). Formally, for any $x \in \mathbb{R}^n$ and any $\tau \in [0,T]$, the cumulative violation cost incurred by the trajectory $\phi_{x}^{\bm a,\delta[\bm a]}(\cdot)$ up to time $\tau$ is given by
\begin{equation}\label{penalty}
    J(x,\tau;\bm a,\delta)
    \coloneqq \int_{0}^\tau \ell\bigl(s,\phi^{\bm a,\delta[\bm a]}_{x}(s)\bigr)\,ds.
\end{equation}

\noindent The cost \(J(x,\tau;\bm a,\delta)\) captures both how long the aircraft operates outside the nominal region \(\mathbb{C}_{2}\) and how severe that operation is. Longer operation in the degraded region and larger values of \(\ell(s,\phi_x^{\bm a,\delta[\bm a]}(s))\) both increase the cumulative cost; thus, states where \(\ell\) is large are interpreted as more damaging.

To formalize the soft safety constraint associated with the set \(\mathbb{C}_{2}\), we introduce a violation budget \(Q\) and impose \(J(x,\tau;\bm a,\delta)\leq Q\). Equivalently, a trajectory satisfies the soft safety constraint over \([0,\tau]\) if its cumulative violation cost does not exceed \(Q\). The parameter \(Q\in[0,\infty)\) specifies the maximum allowable cumulative cost associated with off-nominal operation for a given functional form of \(\ell\).

\subsection{Soft-Constrained Reach-Avoid Set}\label{subsec:Sets}

We now formalize our main object of interest, the emergency landing flight envelope, as the soft-constrained reach-avoid set.

Let \(\mathbb{T}\subset\mathbb{R}^n\) denote the landing target set, and assume that \(\mathbb{T}\) is nonempty and compact. In this study, \(\mathbb{T}\) is specified \emph{a priori} and consists of admissible touchdown states near ground level corresponding to a feasible landing site. In a more general setting, the landing site, and hence the associated target set \(\mathbb{T}\), could be identified as part of the proposed flight envelope computation method.

\begin{definition}\label{soft_constrained_reach_avoid_set}
Fix a budget $Q\in[0,\infty)$ and a cost function $\ell:[0,T]\times\mathbb{R}^n\to[0,\infty)$. The
\emph{soft-constrained reach-avoid set} is defined as
\begin{align}
{\mathcal{RA}}_{Q}&
\coloneqq \bigl\{x\in\mathbb{R}^{n}\bigm|
\forall\delta\in\Delta,\exists\,\bm{a}\in\mathcal{A},\nonumber\\
&\quad \exists\,\tau\in[0,T]\;:\;
\phi^{\bm{a},\delta[\bm{a}]}_{x}(\tau)\in\mathbb{T}\cap\mathbb{C}_{2},\nonumber\\
&\quad \hspace{5.7em} J(x,\tau;\bm a,\delta) \;\le\; Q, \label{eq:RA_soft_time_dep_family}\\ 
&\quad \forall\,s\in[0,\tau]\;:\;
\phi^{\bm{a},\delta[\bm{a}]}_{x}(s)\in \mathbb{C}_{1}\nonumber
\bigl\}.
\end{align}
\end{definition}

A state $x$ belongs to $\mathcal{RA}_{Q}$ if, starting from $x$, for every disturbance strategy $\delta\in\Delta$, Player~$\mathrm{A}$ can select a control policy $\bm a(\cdot)$ that guarantees the following: the trajectory $\phi^{\bm a,\delta[\bm a]}_{x}(\cdot)$ reaches a safe landing set within the nominal region, i.e., $\mathbb{T}\cap\mathbb{C}_{2}$, while satisfying both the hard and soft safety constraints. Equivalently, membership of $x$ in $\mathcal{RA}_{Q}$ is a necessary and sufficient condition for the existence of a robust control policy that guarantees a safe landing from $x$. Thus, $\mathcal{RA}_{Q}$ defines the emergency landing flight envelope, where the tolerance to off-nominal operation is governed by the violation cost function $\ell$ and the budget $Q$.

\section{Soft-Constrained Value Function}\label{Soft_Constrained_VF}

In this section, we use the framework of~\cite{mballo} to construct a value function whose zero sublevel set coincides with the emergency landing flight envelope introduced in Section~\ref{subsec:Sets}. This value function characterization provides a practical way to compute the flight envelope numerically by solving an HJ variational inequality (HJI-VI). This, in turn, provides a natural basis for the sensitivity analysis and synthesis procedures carried out in Sections~\ref{sec:sensitivity} and~\ref{tuning}, respectively.

Since the admissible operating region \(\mathbb{C}_{1}\), the nominal operating region \(\mathbb{C}_{2}\), and the landing region \(\mathbb{T}\) are compact subsets of \(\mathbb{R}^n\), each of these sets admits an implicit representation as the zero sublevel set of a signed distance function. In particular, we write
\begin{equation}
\mathbb{T} = \bigl\{ x \in \mathbb{R}^n \,\big|\, d_{\mathbb{T}}(x)\leq 0 \bigr\},
\label{eq:signed_distance}
\end{equation}
where $d_{\mathbb{T}}$ denotes the signed distance function associated with $\mathbb{T}$. Similarly,
\begin{equation}\label{equa:signed_distance_c1_c2}
\mathbb{C}_{1} = \bigl\{x \,\big|\, d_{\mathbb{C}_{1}}(x) \le 0\bigr\},\quad
\mathbb{C}_{2} = \bigl\{x \,\big|\, d_{\mathbb{C}_{2}}(x) \le 0\bigr\}.
\end{equation}

From Eqs.~\eqref{eq:signed_distance} and \eqref{equa:signed_distance_c1_c2}, membership of a state in \(\mathbb{C}_{1}\), \(\mathbb{C}_{2}\), or \(\mathbb{T}\) can be assessed directly from the sign of the associated signed distance function. Specifically, for each of these sets, a state lies inside it if the signed distance is negative, on the boundary if it is zero, and outside it if it is positive.

Using the signed distance function representations of \(\mathbb{C}_{1}\), \(\mathbb{C}_{2}\), and \(\mathbb{T}\) in Eqs.~\eqref{eq:signed_distance} and \eqref{equa:signed_distance_c1_c2}, we introduce a value function \(V\) that provides an implicit representation of~\eqref{eq:RA_soft_time_dep_family}. In the definition of \(V\) below, for any \(t\in[0,T]\), \(\mathcal{A}_t\) denotes the set of admissible control signals on \([t,T]\), \(\Delta_t\) denotes the associated class of nonanticipative strategies on \([t,T]\) (cf.~\eqref{eq:nonanticipative}), and \(\phi_{t,x}^{\bm a,\delta[\bm a]}(\cdot):[t,T]\to\mathbb{R}^n\) denotes the unique trajectory of~\eqref{eq:ode} with initial condition \(\mathrm{x}(t)=x\), induced by the control-disturbance pair \((\bm a,\delta[\bm a])\).

The value function \(V:[0,T]\times\mathbb{R}^{n}\times[0,\infty)\to\mathbb{R}\) is defined by
\begin{equation}\label{value_function_adjusted_family}
\begin{aligned}
V(t,x,Q)\coloneqq {}&\sup_{\delta \in \Delta_{t}}\inf_{\bm{a}\in\mathcal{A}_{t}}\min_{\tau \in [t, T]}
\max\Bigl\{ d_{\mathbb{C}_{2}}\bigl(\phi^{\bm{a},\delta[\bm{a}]}_{t,x}(\tau)\bigr),\\
&\int_{t}^{\tau} \ell\bigl(s,\phi^{\bm{a},\delta[\bm{a}]}_{t,x}(s)\bigr)\,ds - Q,\\ 
&\max_{s\in [t,\tau]} d_{\mathbb{C}_{1}}\bigl(\phi^{\bm{a},\delta[\bm{a}]}_{t,x}(s)\bigr),\;
d_{\mathbb{T}}\bigl(\phi^{\bm{a},\delta[\bm{a}]}_{t,x}(\tau)\bigr)
\Bigr\}.
\end{aligned}
\end{equation}
For a landing task initiated from state \(x\) at time \(t\) with budget \(Q\), the value function \(V\) evaluates, under the worst-case disturbance strategy \(\delta\in\Delta_t\) and the best available control input signal \(\bm a\in\mathcal A_t\), whether there exists a time \(\tau\in[t,T]\) at which the aircraft can reach the landing target within the nominal regime while satisfying both the hard and soft safety requirements. The supremum over \(\delta\) captures the worst-case effect of the disturbance, while the infimum over \(\bm a\) reflects the controller's best effort to steer the system toward a successful landing despite that disturbance. 

In particular, \(V\) simultaneously accounts for four quantities:

\begin{enumerate}
  \item[(i)] \label{term_1} the cumulative cost of off-nominal operation relative to the available budget through
  \begin{equation}\label{term1}
\begin{aligned}
\int_t^\tau \ell\bigl(s,\phi^{\bm a,\delta[\bm a]}_{t,x}(s)\bigr)\,ds - Q,
\end{aligned}
\end{equation}

  \item[(ii)] \label{term_2} violation of the admissible operating region \(\mathbb{C}_1\) through
\begin{equation}\label{term2}
\begin{aligned}
\max_{s\in[t,\tau]} d_{\mathbb{C}_1}\bigl(\phi^{\bm a,\delta[\bm a]}_{t,x}(s)\bigr),
\end{aligned}
\end{equation}

  \item[(iii)] \label{term_3} whether the aircraft trajectory lies in the nominal operating region \(\mathbb{C}_2\) at the candidate landing time \(\tau\) through
  
\begin{equation}\label{term3}
\begin{aligned}
d_{\mathbb{C}_2}\bigl(\phi^{\bm a,\delta[\bm a]}_{t,x}(\tau)\bigr),
\end{aligned}
\end{equation}

\item[(iv)] \label{term_4} whether the trajectory reaches the landing target set \(\mathbb{T}\) at that same time \(\tau\) through

\begin{equation}\label{term4}
\begin{aligned}
d_{\mathbb{T}}\bigl(\phi^{\bm a,\delta[\bm a]}_{t,x}(\tau)\bigr).
\end{aligned}
\end{equation}

\end{enumerate}

% (i) the cumulative cost of off-nominal operation relative to the available budget through
% \begin{equation}\label{term1}
% \begin{aligned}
% \int_t^\tau \ell\bigl(s,\phi^{\bm a,\delta[\bm a]}_{t,x}(s)\bigr)\,ds - Q,
% \end{aligned}
% \end{equation}
% (ii) violation of the admissible operating region \(\mathbb{C}_1\) through
% \begin{equation}\label{term2}
% \begin{aligned}
% \max_{s\in[t,\tau]} d_{\mathbb{C}_1}\bigl(\phi^{\bm a,\delta[\bm a]}_{t,x}(s)\bigr),
% \end{aligned}
% \end{equation}
% (iii) whether the aircraft trajectory lies in the nominal operating region \(\mathbb{C}_2\) at the candidate landing time \(\tau\) through
% \begin{equation}\label{term3}
% \begin{aligned}
% d_{\mathbb{C}_2}\bigl(\phi^{\bm a,\delta[\bm a]}_{t,x}(\tau)\bigr),
% \end{aligned}
% \end{equation}
% and (iv) whether the trajectory reaches the landing target set \(\mathbb{T}\) at that same time \(\tau\) through
% \begin{equation}\label{term4}
% \begin{aligned}
% d_{\mathbb{T}}\bigl(\phi^{\bm a,\delta[\bm a]}_{t,x}(\tau)\bigr).
% \end{aligned}
% \end{equation}

\noindent If there exists a time \(\tau\in[t,T]\) such that all four quantities defined in \eqref{term1}, \eqref{term2}, \eqref{term3}, and \eqref{term4} are nonpositive, then a safe landing is possible. To determine whether such a time exists, \(V\) takes the minimum over \(\tau\in[t,T]\) of the maximum of these four quantities.

The following proposition shows that \(\mathcal{RA}_{Q}\) can be characterized as the zero sublevel set of \(V\).

\begin{proposition}\label{prop_1}
Let \(V:[0,T]\times\mathbb{R}^{n}\times[0,\infty)\to\mathbb{R}\) be the value function defined in~\eqref{value_function_adjusted_family}. Then, for each \(Q\in[0,\infty)\), the soft-constrained reach-avoid set \(\mathcal{RA}_Q\) defined in~\eqref{eq:RA_soft_time_dep_family} satisfies
\begin{equation}\label{eq:RAQn-zero-sub}
\mathcal{RA}_{Q}
=\bigl\{x\in\mathbb{R}^{n}\ \big|\ V(0,x,Q)\le 0\bigr\}.
\end{equation}
\end{proposition}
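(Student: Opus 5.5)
The plan is to prove the two inclusions separately, working at $t=0$. Throughout, write the payoff as
\[
P(x,Q;\bm a,\delta)\coloneqq \min_{\tau\in[0,T]} \max\Bigl\{ d_{\mathbb{C}_2}(\phi(\tau)),\ \textstyle\int_0^\tau \ell(s,\phi(s))\,ds-Q,\ \max_{s\in[0,\tau]} d_{\mathbb{C}_1}(\phi(s)),\ d_{\mathbb{T}}(\phi(\tau))\Bigr\},
\]
with $\phi=\phi^{\bm a,\delta[\bm a]}_{x}$. Then $V(0,x,Q)=\sup_{\delta}\inf_{\bm a}P$.

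The first step is that the minimum over $\tau$ is attained. The trajectory is absolutely continuous, the signed distance functions are Lipschitz, and $\ell$ is bounded, so $\tau\mapsto\int_0^\tau\ell\,ds$ is continuous. Likewise $\tau\mapsto\max_{s\le\tau}d_{\mathbb{C}_1}(\phi(s))$ is continuous and nondecreasing. The inner maximum is therefore continuous on the compact interval $[0,T]$, and the minimum exists. It follows that $P(x,Q;\bm a,\delta)\le 0$ holds exactly when some $\tau$ makes all four terms nonpositive. By the sign convention for signed distances in \eqref{eq:signed_distance}--\eqref{equa:signed_distance_c1_c2}, this is equivalent to: $\phi(\tau)\in\mathbb{T}\cap\mathbb{C}_2$, $J(x,\tau;\bm a,\delta)\le Q$, and $\phi(s)\in\mathbb{C}_1$ for all $s\le\tau$. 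These are precisely the conditions in \eqref{eq:RA_soft_time_dep_family}.

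The inclusion $\mathcal{RA}_Q\subseteq\{V(0,\cdot,Q)\le 0\}$ then follows directly. If $x\in\mathcal{RA}_Q$, then for every $\delta$ there is an $\bm a$ with $P\le 0$. Hence $\inf_{\bm a}P\le 0$ for every $\delta$, and taking the supremum gives $V(0,x,Q)\le 0$.

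The reverse inclusion is the main obstacle, because $V\le 0$ only gives $\inf_{\bm a}P\le 0$ for each $\delta$, and the infimum need not be attained. To handle this, I would remove the integral constraint by augmenting the state with the remaining budget $z$:
\[
\dot z=-\ell(s,\mathrm{x}),\qquad z(0)=Q.
\]
The soft constraint becomes the terminal condition $-z(\tau)\le 0$. The augmented time-dependent dynamics are still bounded and Lipschitz in $(\mathrm{x},z)$ under the standing assumptions on $f$ and $\ell$. In these variables $V(0,x,Q)$ is exactly a classical reach-avoid value with target function $\max\{d_{\mathbb{C}_2},d_{\mathbb{T}},-z\}$ and constraint function $d_{\mathbb{C}_1}$. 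This is the reduction used in~\cite{mballo}, and I would invoke the corresponding result there (in the style of~\cite{MargellosLygeros2011,Fisac}), which identifies the zero sublevel set of such a value with the reach-avoid set.

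If a self-contained argument is preferred instead, I would fix $\delta$ and take $\bm a_k$ with $P\le 1/k$. The aim is to extract a limit trajectory satisfying every constraint with margin $0$. This would use Arzel\`a--Ascoli on the uniformly Lipschitz augmented trajectories and a Filippov-type closure, which requires convexity of the velocity set, or else passing to relaxed controls. The remaining difficulty is that $\delta[\bm a_k]$ changes with $k$. I would resolve it either through the compactness of the set of admissible trajectory pairs or, failing that, by invoking the cited result~\cite{mballo} for the augmented problem.
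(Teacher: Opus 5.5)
Your inclusion $\mathcal{RA}_Q\subseteq\{x : V(0,x,Q)\le 0\}$ is correct. It is the paper's part (i), argued directly rather than by contradiction, and your check that the minimum over $\tau$ is attained is a useful addition. The reverse inclusion, however, is not proved, and you have identified exactly why: $V(0,x,Q)\le 0$ only gives $\inf_{\bm a}P\le 0$ for each $\delta$.

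The paper's part (ii) does not get around this either. It asserts that $x\notin\mathcal{RA}_Q$ yields some $\tilde\delta$ and $\eta>0$ with $\inf_{\bm a}\min_{\tau}C(0,x,Q,\tau,\tilde\delta,\bm a)\ge\eta$. But negating the definition only gives $C>0$ for every $\bm a$ and $\tau$, i.e.\ $\inf\ge 0$. The uniform margin $\eta$ is precisely the missing attainment.

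Your compactness fallback does not close the gap either. A nonanticipative $\delta$ need not be continuous in any topology on $\mathcal{A}$; for example, it may switch to a different disturbance only when $\bm a$ coincides a.e.\ with one particular signal on $[0,s_0]$. A uniform limit of $\phi^{\bm a_k,\delta[\bm a_k]}_x$ is then generated by some pair $(\bm a^\ast,\bm b^\ast)$, with no reason for $\bm b^\ast=\delta[\bm a^\ast]$. Compactness of the set of trajectory pairs cannot supply that identity. Relaxed controls only certify success for the relaxed problem, not membership in $\mathcal{RA}_Q$.

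More decisively, under the paper's stated hypotheses ($\mathbb{A}$ compact but not necessarily convex), the reverse inclusion is false. So neither your argument, an appeal to \cite{mballo}, nor the paper's argument can establish it. Consider the following data.
\begin{itemize}
\item Take $n=3$, $\mathbb{A}=\{-1,1\}$, $\mathbb{B}$ a singleton, and $f(x,a,b)=(a,\min\{x_1^2,1\},1)$.
\item Take $T=1$, $\ell\equiv 0$, and $Q=0$.
\item Take $\mathbb{C}_1=\mathbb{C}_2=[-2,2]^2\times[-1,2]$, $\mathbb{T}=[-1,1]\times[-1,0]\times[1,2]$, and initial state $x=0$.
\end{itemize}
All standing hypotheses hold, and every trajectory stays in the interior of $\mathbb{C}_1$. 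Reaching $\mathbb{T}$ forces $\tau=1$ and $\int_0^1\min\{x_1(s)^2,1\}\,ds\le 0$, hence $x_1\equiv 0$. That is impossible because $|\dot x_1|=1$ a.e., so $x\notin\mathcal{RA}_0$. Yet controls that switch sign every $h$ units of time give $d_{\mathbb{T}}(\phi(1))=x_2(1)\le h^2$. Hence $V(0,x,0)=0$.

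What does hold in general is $\{x : V(0,x,Q)<0\}\subseteq\mathcal{RA}_Q\subseteq\{x : V(0,x,Q)\le 0\}$. The stated equality needs one of two repairs:
\begin{itemize}
\item additional hypotheses, at minimum convexity of $f(x,\mathbb{A},b)$, together with an argument that copes with strategies having no continuity in $\bm a$; or
\item a modified definition of $\mathcal{RA}_Q$, for instance an $\epsilon$-margin version.
\end{itemize}
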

\begin{proof}
See Appendix~\ref{apx:prop1}.
\end{proof}
It follows from~\eqref{eq:RAQn-zero-sub} that \(V(0,x,Q)\) is less than or equal to zero if and only if the state \(x\) belongs to \(\mathcal{RA}_Q\); otherwise \(V(0,x,Q)>0\). Equivalently, saying that $x$ is an element of $\mathcal{RA}_Q$ means that Player~$\mathrm{A}$ can guarantee reaching a safe landing region in $\mathbb{T}$ within $[0,T]$ while satisfying the hard and soft constraints against any admissible disturbance strategy of Player~$\mathrm{B}$.

To compute \(V\), we follow the procedure in~\cite{mballo}. The first step is to augment the system dynamics~\eqref{eq:ode} with an auxiliary state variable. Consider the augmented dynamics \(\hat{f}:[0, T]\times\mathbb{R}^n\times\mathbb{R}\times\mathbb{A}\times\mathbb{B}\to\mathbb{R}^n\times\mathbb{R}\),
\begin{equation}
\begin{bmatrix}\dot{\mathrm{x}}(s)\\ \dot{\mathrm{z}}(s)\end{bmatrix}
\!\!=\!\!\hat{f}\big(s,\mathrm{x}(s),\mathrm{z}(s),\!\bm a(s),\!\bm b(s)\big)
\!=\!\begin{bmatrix} f\big(\mathrm{x}(s),\bm a(s),\bm b(s)\big)\\ -\ell\big(s,\mathrm{x}(s)\big)\!\end{bmatrix}\!.
\label{eq:augmented_ode}
\end{equation}
\noindent where \(s\in[t,T]\) and the initial condition is \(\begin{bmatrix}\mathrm{x}(t)\\[2pt] \mathrm{z}(t)\end{bmatrix}
= \begin{bmatrix}x\\[2pt] z\end{bmatrix}\).

Here, the auxiliary state trajectory \(\mathrm{z}(\cdot):[t,T]\to\mathbb{R}\) tracks the remaining violation budget along the aircraft trajectory, where \(z\in[0,\infty)\) denotes the initial budget. Because \(\dot{\mathrm{z}}(s)=-\ell(s,\mathrm{x}(s))\), the remaining budget decreases whenever the aircraft operates outside the nominal regime.

Under the standing assumptions on \(f\) in Section~\ref{subsec:ra_zero_sum_game} and on \(\ell\) in Section~\ref{subsec:Sets}, the augmented dynamics \(\hat{f}\) are bounded and uniformly continuous, and Lipschitz in \((x,z)\) uniformly over \((s,a,b)\). Therefore, for any \(\bm a\in\mathcal{A}_{t}\) and \(\delta\in\Delta_{t}\), the pair \((\bm a,\delta[\bm a])\) induces a unique Lipschitz continuous augmented trajectory, denoted by \(\phi^{\bm a,\delta[\bm a]}_{t,x,z}(\cdot):[t,T]\to\mathbb{R}^n\times\mathbb{R}\), which solves~\eqref{eq:augmented_ode} almost everywhere on \([t,T]\)~\cite{c16}. This solution can be written as 
\({\phi}^{\bm{a},\delta[\bm a]}_{t,x,z} 
= \big(\phi^{\bm{a},\delta[\bm a]}_{t,x},\ \xi^{\bm{a},\delta[\bm a]}_{t,x,z}\big)\),
since the state variable \(x\) evolves independently of \(z\) in the augmented dynamics. The function \(\xi^{\bm{a},\delta[\bm a]}_{t,x,z}(\cdot)\) denotes the unique solution of the scalar ODE for \(z\). Using~\eqref{eq:augmented_ode}, the value function \(V\) can be expressed as
\begin{multline}
\!V(t,x,z)\!=\! 
\sup_{\delta \in \Delta_{t}}\;\!\!\inf_{\bm{a} \in \mathcal{A}_{t}}\;\!\!\min_{\tau \in [t, T]}\;\!\!
\max\Bigl\{\!
d_{\mathbb{C}_{2}}\bigl(\phi^{\bm{a},\delta[\bm{a}]}_{t,x}(\tau)\bigr),\\
\hspace{0.20em}{-\;\!\!\xi^{\bm{a},\delta[\bm{a}]}_{t,x,z}(\tau),\;\!\!
\max_{s \in [t,\tau]}\;\!\!\!d_{\mathbb{C}_{1}}\bigl(\phi^{\bm{a},\delta[\bm{a}]}_{t,x}(s)\bigr),\;\!
d_{\mathbb{T}}\bigl(\phi^{\bm{a},\delta[\bm{a}]}_{t,x}(\tau)\bigr)
\!\Bigr\},}
\label{eq:soft_value_function_rewritten}
\end{multline}

In~\eqref{eq:soft_value_function_rewritten}, the budget \(Q\) is encoded in the initial value of the augmented state; that is, we set \(z=Q\), or equivalently, \(\xi^{\bm a,\delta[\bm a]}_{t,x,z}(t)=z=Q\). Applying Bellman’s Principle of Optimality to \eqref{eq:soft_value_function_rewritten} yields that \(V\) is the unique viscosity solution of the HJI–VI stated in Theorem~\ref{final_theorem}.
\begin{theorem}\label{final_theorem}
The value function \(V\) defined in~\eqref{eq:soft_value_function_rewritten} is Lipschitz continuous and is the unique viscosity solution of the HJI variational inequality
\begin{align}\label{visco_solu}
0 &= \max\Bigl\{\,\!d_{\mathbb{C}_{1}}(x) - W(t,x,z),\;
\min\Bigl[ \max\{d_{\mathbb{C}_{2}}(x), -z, \nonumber\\
& \hspace*{0.3em} d_{\mathbb{T}}(x)\}\!-\!V(t,x,z),\;
\frac{\partial V}{\partial t}
+ \!H\bigl(t, x, z, \frac{\partial V}{\partial{x}}, \frac{\partial V}{\partial{z}}\bigr) \Bigr] \Bigr\},
\end{align}
for \((t,x,z)\in[0,T)\times\mathbb{R}^n\times[0,\infty)\), with terminal condition
\begin{equation}\label{viscosity_solu2}
V(T,x,z)=\max\{d_{\mathbb{C}_{1}}(x),\,d_{\mathbb{C}_{2}}(x),\,d_{\mathbb{T}}(x),\,-z\}.
\end{equation}
The Hamiltonian is
\begin{equation}\label{eq:hamiltonian}
H(t,x,z,p_x,p_z)
= \min_{a\in\mathbb{A}}\max_{b\in\mathbb{B}}
\Bigl\{ p_x\cdot f(x,a,b) - p_z\,\ell(t,x) \Bigr\}.
\end{equation}
\end{theorem}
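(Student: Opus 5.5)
The plan follows the standard route for reach-avoid value functions (Fisac et al., Margellos--Lygeros), applied to the augmented system~\eqref{eq:augmented_ode}. We read the first term of~\eqref{visco_solu} as \(d_{\mathbb{C}_1}(x)-V(t,x,z)\), so that \(W\) coincides with \(V\), as in the classical reach-avoid HJI-VI. The structural observation is that~\eqref{eq:soft_value_function_rewritten} is a time-varying reach-avoid value for the augmented state \((x,z)\). Its target function is \(r(x,z)\coloneqq\max\{d_{\mathbb{C}_2}(x),-z,d_{\mathbb{T}}(x)\}\) and its constraint function is \(c(x)\coloneqq d_{\mathbb{C}_1}(x)\). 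Both are Lipschitz, since signed distance functions are \(1\)-Lipschitz and the maximum of Lipschitz functions is Lipschitz. The augmented dynamics \(\hat f\) are bounded, uniformly continuous, and Lipschitz in \((x,z)\) uniformly in \((s,a,b)\). The only nonstandard feature is the explicit time dependence through \(\ell(s,x)\). We handle it either by appending time as a state (\(\dot{\mathrm{s}}=1\)) or by carrying \(t\) through the estimates directly, which is harmless since \(\ell\) is continuous in \(s\) and Lipschitz in \(x\) uniformly in \(s\).

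\emph{Step 1 (Lipschitz continuity).} By Gr\"onwall, for fixed \((\bm a,\delta)\) we have \(\|\phi_{t,x}-\phi_{t,x'}\|_\infty\le e^{L_fT}|x-x'|\). Integrating the Lipschitz bound on \(\ell\) gives \(|\xi_{t,x,z}-\xi_{t,x',z'}|\le |z-z'|+L_\ell Te^{L_fT}|x-x'|\). The functional inside the \(\sup\inf\min\) is built from \(\max\), \(\min_\tau\), and \(\max_s\) of Lipschitz quantities, so it is Lipschitz in \((x,z)\) uniformly in \((\bm a,\delta)\). Because \(\sup\) and \(\inf\) preserve uniform Lipschitz bounds, \(V\) is Lipschitz in \((x,z)\). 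For time, we compare starting times \(t<t'\). A strategy on \([t,T]\) can be restricted or extended to one on \([t',T]\) by fixing an arbitrary control or disturbance on \([t,t']\). The state then moves by at most \(\|f\|_\infty|t-t'|\) and the budget by at most \(\|\ell\|_\infty|t-t'|\). The extra portion of \(\tau\)-range and \(s\)-range contributes at most \(O(|t-t'|)\) by the same bounds. This gives Lipschitz continuity in \(t\).

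\emph{Step 2 (dynamic programming principle).} Using the standard concatenation of nonanticipative strategies (Evans--Souganidis), we prove that for any \(h\in(0,T-t]\)
\begin{multline*}
V(t,x,z)=\sup_{\delta}\inf_{\bm a}\min\Bigl\{\min_{\tau\in[t,t+h]}\max\bigl\{r(\phi_{t,x,z}(\tau)),\max_{s\in[t,\tau]}c(\phi_{t,x}(s))\bigr\},\\
\max\bigl\{V(t+h,\phi_{t,x,z}(t+h)),\max_{s\in[t,t+h]}c(\phi_{t,x}(s))\bigr\}\Bigr\}.
\end{multline*}
This follows by splitting \(\min_{\tau\in[t,T]}\) into \(\tau\le t+h\) and \(\tau>t+h\), and noting that \(\max_{s\in[t,\tau]}c\) splits into the segment up to \(t+h\) and the remainder. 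The two inequalities are proved separately with \(\varepsilon\)-optimal strategies. Matching disturbance strategies requires the usual care with \(\varepsilon\)-optimal pieces chosen on a finite partition of a neighborhood of \(\phi(t+h)\); the Lipschitz bound from Step 1 makes this possible.

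\emph{Step 3 (viscosity sub/supersolution).} Taking \(h=0\) in the DPP immediately gives \(V\ge c\) and \(V\le r\) wherever \(V\ge c\). This yields the complementarity structure and the terminal condition~\eqref{viscosity_solu2}, since at \(t=T\) only \(\tau=T\) remains. At a point where \(V-\varphi\) attains a local extremum and the relevant obstacle is inactive (\(V>c\), respectively \(V<r\)), we use the DPP with small \(h\). For the \(\tau\le t+h\) branch, continuity shows it does not bind. For the other branch, we argue by contradiction with a constant control or disturbance that is \(\varepsilon\)-optimal in the Hamiltonian~\eqref{eq:hamiltonian}. This yields \(\partial_t\varphi+H\ge0\) or \(\le0\) in the appropriate sense. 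The \(-p_z\ell\) term arises from \(\dot z=-\ell\).

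\emph{Step 4 (uniqueness).} We prove a comparison principle for~\eqref{visco_solu} by the standard doubling-of-variables argument. The Hamiltonian satisfies \(|H(t,x,z,p)-H(t',x',z',p)|\le C(1+|p|)(|x-x'|+\omega(|t-t'|))\) and \(|H(t,x,z,p)-H(t,x,z,q)|\le C|p-q|\), and the obstacles are continuous, so the argument applies. The domain boundary \(z=0\) needs attention. Since \(\dot z\le0\) and \(-z\) enters \(r\), we can extend the problem to \(z\in\mathbb{R}\) (the formula defines \(V\) there as well) and prove comparison on \(\mathbb{R}^n\times\mathbb{R}\), then restrict. We expect Steps 2 and 4 to be the main obstacle, specifically the DPP with the running maximum \(\max_s c\) combined with nonanticipative strategies, and the comparison in the presence of the double obstacle. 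The first thing to try is to cite the reach-avoid results of Fisac et al.\ and~\cite{mballo}, verifying only that their hypotheses (Lipschitz obstacles, bounded uniformly continuous dynamics, now time-dependent) hold for \(\hat f\).
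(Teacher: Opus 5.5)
Your Steps 1--3 are the argument the paper has in mind. The paper gives no self-contained proof: it only invokes Bellman's principle on \eqref{eq:soft_value_function_rewritten} and defers to Theorem~3 of \cite{mballo} and to Evans--Souganidis \cite{Evans}. Your Lipschitz estimates, reach-avoid DPP for the augmented state, and obstacle-by-obstacle viscosity tests are that argument written out.

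One local repair is needed in Step~3. At a local minimum of $V-\varphi$, a constant disturbance does not suffice, because $H=\min_a\max_b$. From $\varphi_t+H>\theta$ you only get, for each $a$, some $b(a)$ that works. You therefore need the nonanticipative strategy $\delta[\bm a](s)=\beta(\bm a(s))$, with $\beta$ piecewise constant on a finite Borel partition of $\mathbb{A}$, as in Evans--Souganidis. A constant control is fine at a local maximum.

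The genuine gap is the last move of Step~4. Proving comparison on $[0,T)\times\mathbb{R}^n\times\mathbb{R}$ and then restricting does not give uniqueness on $[0,T)\times\mathbb{R}^n\times[0,\infty)$. A solution defined only for $z\ge0$ need not extend to a solution on all $z\in\mathbb{R}$ that carries the terminal data \eqref{viscosity_solu2} for $z<0$. Since $\dot z=-\ell\le0$, augmented trajectories starting at $z\ge0$ can cross into $z<0$, so values on the half-space depend on data outside it. Without a boundary condition at $z=0$, uniqueness there actually fails.

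Here is a counterexample. Take $f\equiv0$ and $\ell=\ell(x)$, so that $V=\max\{d_{\mathbb{C}_1},d_{\mathbb{C}_2},d_{\mathbb{T}},-z\}$. Choose $x_0$ in the interior of $\mathbb{C}_1\setminus\mathbb{C}_2$ with $\ell(x_0)>0$. Choose a small smooth bump $\psi\ge0$ with $\psi(x_0)>0$, supported inside that interior. Set $\tilde V\coloneqq V-\psi(x)\bigl(\ell(x)(T-t)-z\bigr)^+$. Then:
\begin{enumerate}
\item Near $\operatorname{supp}\psi$ and for $z\ge0$ one has $V=\max\{d_{\mathbb{C}_2},-z,d_{\mathbb{T}}\}$, so $d_{\mathbb{C}_1}<\tilde V\le\max\{d_{\mathbb{C}_2},-z,d_{\mathbb{T}}\}$.
\item There, $\tilde V$ is constant along the characteristics $(t+s,x,z-\ell(x)s)$ of $V_t-\ell V_z=0$, and it equals $V$ elsewhere. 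Hence it is a Lipschitz function satisfying \eqref{visco_solu} at every point with $z>0$.
\item It agrees with $V$ at $t=T$ for all $z\ge0$.
\item Yet $\tilde V(t,x_0,z)<V(t,x_0,z)$ whenever $0\le z<\ell(x_0)(T-t)$.
\end{enumerate}

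The paper's one-line citation does not address this either. What your argument, and full-space results such as Evans--Souganidis, actually deliver is uniqueness on $[0,T)\times\mathbb{R}^n\times\mathbb{R}$ with \eqref{viscosity_solu2} imposed for every $z\in\mathbb{R}$. Alternatively, uniqueness on $z\ge0$ holds only after prescribing data at $z=0$. You should state and prove the theorem in one of those forms rather than claim the half-space version by restriction.
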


The existence and uniqueness of the viscosity solution to the HJI-VI in Theorem~\ref{final_theorem} follow by arguments analogous to those in Theorem~3 of~\cite{mballo} and standard viscosity-solution results for differential games~\cite{Evans}; for brevity, we omit the proof. The HelperOC toolbox~\cite{Toolbox_2} and the Level Set Toolbox~\cite{Mitchell2008ToolboxLS} can be used to numerically compute the value function \(V\) by solving the HJI-VI in Theorem~\ref{final_theorem}.

% The Lipschitz continuity of \(W\) follows from the regularity of the augmented dynamics~\eqref{eq:augmented_ode} and from the Lipschitz continuity (in \(x\)) of the signed distance functions defining \(\mathbb{T}\), \(\mathbb{C}_1\), and \(\mathbb{C}_2\); see, e.g.,~\cite{Evans}. Furthermore, existence and uniqueness of the viscosity solution to the HJI-VI in Theorem~\ref{final_theorem} follow by arguments analogous to those in Theorem~3 of~\cite{mballo}; for brevity, we omit the proof.

\section{Sensitivity of the Emergency Landing Flight Envelope to the Violation Cost Function}\label{sec:sensitivity}

% The choice of the violation cost function $\ell$ is aircraft-dependent; for instance, one may parameterize $\ell$ as a weighted sum of basis functions, \( \ell_{\boldsymbol{\lambda}}(s,x)=\sum_{k=1}^{K}\lambda_k\,\ell_k(s,x) \)
% (e.g., polynomial bases), or via more general nonlinear compositions, and tune the coefficient vector
% \(\boldsymbol{\lambda}=(\lambda_1,\ldots,\lambda_K)^\top \in [0, \infty)^K\) (manually or algorithmically) to set a desired tolerance to off-nominal operation (e.g., as a function of remaining fatigue life). 

The choice of the violation cost function $\ell$ is aircraft-dependent, since tolerance to off-nominal operation may vary with vehicle type or condition. To capture this variability, we consider a parameterized family of violation cost functions $\{\ell_\lambda\}_{\lambda \in [0,\infty)}$ drawn from a broader admissible class $\mathcal{L}$, where the parameter $\lambda$ encodes the level of tolerance to degraded operation. For clarity of exposition, we focus on the scalar-parameter case.\footnote{The scalar parameterization induces a total order on $[0,\infty)$. Analogous sensitivity conclusions extend to multi-parameter families $\ell_{\boldsymbol{\lambda}}$, $\boldsymbol{\lambda}\in[0,\infty)^K$, along any componentwise monotone path in $[0,\infty)^K$.}

% \footnote{We pose this study around a single parameter to obtain a total ordering over the parameter set. Several of our results consider limits as $\lambda$ approaches the boundary of its domain. One may view these results as existing along a line segment through a larger parameter space; this interpretation suggests immediate generalizations.}.

% we consider a one-parameter family of cost functions \(\{\ell_\lambda\}_{\lambda \in [0, \infty)}\),\footnote{We pose this study around a single parameter to obtain a more natural ordering and monotonicity over the parameter set. Several of our results consider limits as $\lambda$ approaches the boundary of its domain. One may view these results as existing along a line segment through a larger parameter space; this interpretation suggests immediate generalizations, which we leave to future work to preserve conceptual clarity.} drawn from a broader admissible class \(\mathcal{L}\).

We define the class \(\mathcal{L}\) of admissible families of violation cost functions as follows.

\begin{definition}\label{def:class_of_families}
Let \(\Lambda\subseteq[0,\infty)\). The class \(\mathcal{L}\) consists of all families
\(\{\ell_\lambda\}_{\lambda\in\Lambda}\) with \(\ell_\lambda:[0,T]\times\mathbb{R}^n\to[0,\infty)\) such that:
\begin{enumerate}
\item[(i)] For all \(\lambda\in\Lambda\) and \(t\in[0,T]\),
\[
\ell_\lambda(t,x)=0 \ \text{for } x\in \mathbb{C}_{2},
\qquad
\ell_\lambda(t,x)>0 \ \text{for } x\in \mathbb{C}_{2}^c.
\]
\item[(ii)] For all \(\lambda_1<\lambda_2\), \(t\in[0,T]\), and \( x \in \mathbb{C}_1 \setminus \mathbb{C}_2\),
\[
\ell_{\lambda_1}(t,x)\le \ell_{\lambda_2}(t,x).
\]
\item[(iii)] Each \(\ell_\lambda\) is continuous on \([0,T]\times\mathbb{R}^n\), and there exist constants
\(M,L<\infty\) such that, for all \(\lambda\in\Lambda\), \(t\in[0,T]\), and \(x,y\in\mathbb{R}^n\),
\[
0\le \ell_\lambda(t,x)\le M,
\qquad
|\ell_\lambda(t,x)-\ell_\lambda(t,y)| \le L \|x-y\|.
\]
Moreover, \(\lambda\mapsto \ell_\lambda(t,x)\) is uniformly continuous on \(\Lambda\), with a modulus of continuity independent of \((t,x)\).
\end{enumerate}
\end{definition}

By Definition~\ref{def:class_of_families}, for any $\lambda\in\Lambda$, the function $\ell_\lambda$ assigns zero violation cost to states in $\mathbb{C}_{2}$ and a strictly positive cost to states in its complement $\mathbb{C}_{2}^{\,c}$. Importantly, aside from the monotonicity requirement in $\lambda$, we impose no further structure on the map $\lambda \mapsto \ell_\lambda$. 

As an illustrative example, one may construct $\ell_\lambda$ so that states located farther from the boundary $\partial\mathbb{C}_{2}$ incur higher violation cost than states closer to the boundary. In this case, the distance from $\partial\mathbb{C}_{2}$ determines the spatial variation of the cost, while $\lambda$ serves as a tuning parameter that controls the conservativeness of $\ell_\lambda$: larger $\lambda$ corresponds to a more conservative (more severe) violation cost function, whereas smaller $\lambda$ corresponds to a less conservative (less severe) one.

Let $V_\lambda$ denote the value function obtained from~\eqref{value_function_adjusted_family} by replacing $\ell$ with $\ell_\lambda$. In particular, $V_\lambda$ can be computed by solving the HJ PDE in Theorem~\ref{final_theorem} with the augmented dynamics $\hat f_\lambda$, defined by substituting $\ell_\lambda$ for $\ell$ in~\eqref{eq:augmented_ode}. Similar to the procedure in Section~\ref{Soft_Constrained_VF}, the corresponding soft-constrained reach-avoid set, denoted by $\mathcal{RA}_{Q;\lambda}$, can be characterized in terms of $V_\lambda$ as follows.

\begin{proposition}\label{prop:char-Wn}
Fix $\lambda\in[0,\infty)$ and $Q\in[0,\infty)$. Then
\begin{equation}\label{eq:RAQn-zero-sublevel}
\mathcal{RA}_{Q;\lambda}
=\bigl\{x\in\mathbb{R}^{n}\ \big|\ V_{\lambda}(0,x,Q)\le 0\bigr\}.
\end{equation}
\end{proposition}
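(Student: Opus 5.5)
This is Proposition~\ref{prop_1} applied with $\ell=\ell_\lambda$, so the proof is a reduction to that result. Proposition~\ref{prop_1} was stated for an arbitrary violation cost $\ell:[0,T]\times\mathbb{R}^n\to[0,\infty)$ under the standing assumptions of Section~\ref{subsec:Constraints}: $\ell$ is bounded, continuous in $s$, and Lipschitz in $x$ uniformly in $s$. So the plan has two parts. First, check that each member $\ell_\lambda$ of a family in $\mathcal{L}$ satisfies these assumptions. Second, observe that $\mathcal{RA}_{Q;\lambda}$ and $V_\lambda$ are literally the objects of Definition~\ref{soft_constrained_reach_avoid_set} and~\eqref{value_function_adjusted_family} with $\ell$ replaced by $\ell_\lambda$.

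The verification step is immediate from Definition~\ref{def:class_of_families}(iii). Each $\ell_\lambda$ is continuous on $[0,T]\times\mathbb{R}^n$, hence continuous in $s$. It is bounded by $M$. It is Lipschitz in $x$ with constant $L$, uniformly in $t$. Nonnegativity holds by construction.

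The second step then follows directly. The augmented dynamics $\hat f_\lambda$ inherit boundedness, uniform continuity, and the uniform Lipschitz property in $(x,z)$. Hence the trajectories $\phi^{\bm a,\delta[\bm a]}_{x}$ are well defined, and the cost $J_\lambda(x,\tau;\bm a,\delta)=\int_0^\tau \ell_\lambda(s,\phi^{\bm a,\delta[\bm a]}_x(s))\,ds$ is well defined. Invoking Proposition~\ref{prop_1} with $\ell=\ell_\lambda$ gives~\eqref{eq:RAQn-zero-sublevel}.

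If one wanted a self-contained argument rather than a citation, I would reproduce the two inclusions of Proposition~\ref{prop_1}.

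\textbf{First inclusion ($\subseteq$).} Take $x\in\mathcal{RA}_{Q;\lambda}$. For every $\delta$ there exist $\bm a$ and $\tau$ making all four terms in~\eqref{value_function_adjusted_family} nonpositive. Hence the inner $\inf_{\bm a}\min_\tau$ is $\le 0$ for every $\delta$, and so the supremum is as well.

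\textbf{Second inclusion ($\supseteq$).} Suppose $V_\lambda(0,x,Q)\le0$. Then for each $\delta$ the infimum over $\bm a$ is $\le 0$, but it may not be attained. This is the main obstacle. I would handle it as follows:
\begin{enumerate}
\item[(a)] Use the existence of the minimizing $\tau$: the objective is continuous in $\tau$ on the compact interval $[0,T]$, since $\tau\mapsto\max_{s\in[0,\tau]}d_{\mathbb{C}_1}(\phi(s))$ and the integral are continuous.
\item[(b)] Deal with the infimum over $\bm a$ either by an $\varepsilon$-argument with approximate minimizers, or by the standard compactness of relaxed trajectories under the nonanticipative structure.
\item[(c)] Conclude using closedness of $\mathbb{T}$, $\mathbb{C}_1$, and $\mathbb{C}_2$ and continuity of the signed distances.
\end{enumerate}
Since exactly this technical point has already been settled in Appendix~\ref{apx:prop1}, the intended proof is the short reduction above.
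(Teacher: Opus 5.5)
Your reduction is exactly the paper's proof: you verify from Definition~\ref{def:class_of_families}(iii) that each $\ell_\lambda$ meets the standing hypotheses on $\ell$, then apply Proposition~\ref{prop_1} with $\ell=\ell_\lambda$. The paper's proof consists only of the remark that the argument for Proposition~\ref{prop_1} carries over. Your hypothesis check and your first inclusion are correct.

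The gap is in your last sentence. The non-attainment obstacle you flag for $\{x\mid V_\lambda(0,x,Q)\le 0\}\subseteq\mathcal{RA}_{Q;\lambda}$ is genuine, and Appendix~\ref{apx:prop1} does not settle it. Its part (ii) asserts that $x\notin\mathcal{RA}_Q$ yields $\eta>0$ and $\tilde\delta$ with $\inf_{\bm a}\min_{\tau}C(0,x,Q,\tau,\tilde\delta,\bm a)\ge\eta$. The definition only yields $C(0,x,Q,\tau,\tilde\delta,\bm a)>0$ for every $(\bm a,\tau)$, and an infimum of positive numbers can be $0$. Your remedies in (b) do not close this either. Approximate minimizers give, for each $\varepsilon>0$, a control with $\min_\tau C\le\varepsilon$; that is membership in an $\varepsilon$-inflated reach-avoid set, not in $\mathcal{RA}_{Q;\lambda}$. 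Relaxed compactness fails for two reasons: $\bm a\mapsto\delta[\bm a]$ need not be continuous in any topology on $\mathcal{A}$, and without convex velocity sets a limit of trajectories need not be a trajectory.

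In fact the equality can fail under the paper's standing assumptions. Consider the following data:
\begin{itemize}
\item[] $n=2$, $\mathbb{A}=\{-1,1\}$, $\mathbb{B}=\{0\}$ (so $\Delta$ is a singleton), $f(x,a,b)=(a,\,1-\min\{x_1^2,1\})$, $T=1$;
\item[] $\mathbb{T}=[-1,1]\times[1,2]$ and $\mathbb{C}_2=[-2,2]\times[-1,3]\subset\mathbb{C}_1=[-3,3]\times[-2,4]$;
\item[] $\ell_\lambda(t,x)=\min\{\max\{d_{\mathbb{C}_2}(x),0\},1\}$ for all $\lambda$, any $Q\ge 0$, and initial state $x=(0,0)$.
\end{itemize}
On $[0,1]$ every trajectory stays in $[-1,1]\times[0,1]$, inside $\mathbb{C}_2$. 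Hence the cost term equals $-Q\le 0$ and the two constraint terms are negative. Since $|\dot x_1|=1$ a.e., $x_1$ cannot vanish identically on $[0,1]$. Therefore $x_2(\tau)=\tau-\int_0^\tau\min\{x_1^2,1\}\,ds<1$ for every $\tau\in[0,1]$, the target is never reached, and $x\notin\mathcal{RA}_{Q;\lambda}$. Yet controls switching sign every $1/(2k)$ keep $|x_1|\le 1/k$, so the distance to $\mathbb{T}$ at time $1$ is at most $1/k^2$. This gives $V_\lambda(0,x,Q)=0$.

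What can be proved along these lines is $\{x\mid V_\lambda(0,x,Q)<0\}\subseteq\mathcal{RA}_{Q;\lambda}\subseteq\{x\mid V_\lambda(0,x,Q)\le 0\}$. Equality requires either an extra hypothesis guaranteeing that the inner infimum over $\bm a$ is attained for each $\delta$, or a modified (e.g.\ $\varepsilon$-relaxed) definition of the set.
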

\noindent The proof of Proposition~\ref{prop:char-Wn} follows the same structure as that of Proposition~\ref{prop_1}.

We study the sensitivity of the emergency landing flight envelope to the violation cost function by varying $\lambda$ to induce a family $\{\ell_\lambda\}_{\lambda\in\Lambda} \in \mathcal{L}$, ranging from less to more conservative cost functions.

\begin{theorem}\label{thm:RA-mono}
Let $\{\ell_\lambda\}_{\lambda\in\Lambda} \in \mathcal{L}$ and let $Q\in[0,\infty)$.
Fix $\lambda \in \Lambda$ and let $(\lambda_k)_{k\ge1} \subset \Lambda$ with $\lambda_k \to \lambda$.
Then,
\begin{enumerate}
  \item[(i)] \label{itm:mono-up}
  If $\lambda_k\uparrow \lambda$, then
  \begin{equation}\label{eq:RA-mono-up}
  \mathcal{RA}_{Q;\lambda_1}\supseteq \mathcal{RA}_{Q;\lambda_2}\supseteq\cdots\supseteq \mathcal{RA}_{Q;\lambda}.
  \end{equation}

  \item[(ii)] \label{itm:mono-down}
  If $\lambda_k\downarrow \lambda$, then
  \begin{equation}\label{eq:RA-mono-down}
  \mathcal{RA}_{Q;\lambda_1}\subseteq \mathcal{RA}_{Q;\lambda_2}\subseteq\cdots\subseteq \mathcal{RA}_{Q;\lambda}.
  \end{equation}

  \item[(iii)] \label{itm:hausdorff}
  The sequence converges in the Hausdorff metric~\footnote{For nonempty subsets $A,B \subset \mathbb{R}^n$, the Hausdorff metric is defined as
\[
d_H(A,B)
\coloneqq
\max\left\{
\sup_{a\in A}\inf_{b\in B}\|a-b\|,
\;
\sup_{b\in B}\inf_{a\in A}\|b-a\|
\right\},
\]
where $\|\cdot\|$ denotes the Euclidean norm.    }:
  \begin{equation}\label{eq:RA-Hausdorff}
  d_H\!\left(\mathcal{RA}_{Q;\lambda_k},\,\mathcal{RA}_{Q;\lambda}\right)\xrightarrow[k\to\infty]{}0.
  \end{equation}
\end{enumerate}
\end{theorem}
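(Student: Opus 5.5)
The plan is to compare violation costs pointwise along admissible trajectories for the inclusions (i)--(ii). For (iii), we show the value functions converge uniformly in $\lambda$ and transfer this to the zero sublevel sets using Proposition~\ref{prop:char-Wn}.

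\emph{Step 1 (monotonicity).} Fix $\mu_1<\mu_2$ in $\Lambda$ and $x\in\mathcal{RA}_{Q;\mu_2}$. For any $\delta\in\Delta$, let $\bm a$ and $\tau$ be the witnesses from Definition~\ref{soft_constrained_reach_avoid_set} for $\ell_{\mu_2}$. Along $[0,\tau]$ the trajectory stays in $\mathbb{C}_1$. On $\mathbb{C}_2$ both costs vanish by Definition~\ref{def:class_of_families}(i), and on $\mathbb{C}_1\setminus\mathbb{C}_2$ we have $\ell_{\mu_1}\le\ell_{\mu_2}$ by (ii). Integrating gives $J_{\mu_1}(x,\tau;\bm a,\delta)\le J_{\mu_2}(x,\tau;\bm a,\delta)\le Q$. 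The remaining conditions do not involve $\ell$, so the same $\bm a,\tau$ certify $x\in\mathcal{RA}_{Q;\mu_1}$. Hence $\mu\mapsto\mathcal{RA}_{Q;\mu}$ is nonincreasing under inclusion, which yields both chains \eqref{eq:RA-mono-up} and \eqref{eq:RA-mono-down} at once. The ordering is only assumed on $\mathbb{C}_1\setminus\mathbb{C}_2$, but the hard constraint confines the relevant portion of the trajectory to $\mathbb{C}_1$, so this suffices.

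\emph{Step 2 (uniform continuity of $V_\lambda$ in $\lambda$).} Let $\omega$ be the uniform modulus from Definition~\ref{def:class_of_families}(iii). For fixed $(\delta,\bm a,\tau)$, the integral term changes by at most $T\,\omega(|\lambda-\mu|)$ when $\ell_\lambda$ is replaced by $\ell_\mu$, and the other three terms in \eqref{value_function_adjusted_family} are unchanged. The maps $\max$, $\min_\tau$, $\inf_{\bm a}$ and $\sup_\delta$ are all $1$-Lipschitz in sup-norm, so
\[
\sup_{x}\bigl|V_\lambda(0,x,Q)-V_\mu(0,x,Q)\bigr|\le T\,\omega(|\lambda-\mu|).
\]
The sets $\mathcal{RA}_{Q;\lambda_k}$ are closed, since $V_\lambda$ is continuous by Theorem~\ref{final_theorem} and Proposition~\ref{prop:char-Wn} applies. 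They are also contained in the compact set $\mathbb{C}_1$, since the hard constraint at $s=0$ forces $x\in\mathbb{C}_1$. Hence they are compact, and we assume they are nonempty so that $d_H$ is defined.

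\emph{Step 3 (Hausdorff limit).} A nested sequence of compact sets converges in $d_H$ to its intersection (decreasing case) or to the closure of its union (increasing case). It therefore remains to identify the limit with $\mathcal{RA}_{Q;\lambda}$.

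In case (i), $\lambda_k\uparrow\lambda$, Step~1 gives $\bigcap_k\mathcal{RA}_{Q;\lambda_k}\supseteq\mathcal{RA}_{Q;\lambda}$. Conversely, if $V_{\lambda_k}(0,x,Q)\le0$ for all $k$, then Step~2 gives $V_\lambda(0,x,Q)\le T\omega(|\lambda_k-\lambda|)\to0$, so $x\in\mathcal{RA}_{Q;\lambda}$.

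In case (ii), $\lambda_k\downarrow\lambda$, Step~1 gives $\overline{\bigcup_k\mathcal{RA}_{Q;\lambda_k}}\subseteq\mathcal{RA}_{Q;\lambda}$. For the reverse inclusion, any $x$ with $V_\lambda(0,x,Q)<0$ satisfies $V_{\lambda_k}(0,x,Q)<0$ for large $k$ by Step~2.

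The main obstacle is the boundary points, where $V_\lambda(0,x,Q)=0$. To handle them I would first try to show that $\{V_\lambda(0,\cdot,Q)\le0\}=\overline{\{V_\lambda(0,\cdot,Q)<0\}}$. The idea is to perturb $x$ along a trajectory that strictly decreases some active term, for instance spending slightly less budget, using that $\ell_\lambda>0$ off $\mathbb{C}_2$. If a genuine argument fails, this nondegeneracy of the zero level set would be stated as a standing regularity condition.
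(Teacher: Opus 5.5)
Your Steps 1 and 2, and the case $\lambda_k\uparrow\lambda$ of Step 3, are correct. They use the same ingredients as the paper: the uniform bound $|V_\psi-V_{\tilde\psi}|\le T\sup|\ell_\psi-\ell_{\tilde\psi}|$, closedness, compactness, and then identification of the set limits. In two places your version is sounder than the paper's.
\begin{itemize}
\item Step 1 works directly from Definition~\ref{soft_constrained_reach_avoid_set} with the correct $\forall\delta\,\exists\bm a$ order, and uses the cost ordering only on $\mathbb{C}_1$. The paper instead swaps these quantifiers and passes through a pointwise inequality $V_{\lambda_1}\le V_{\lambda_2}$, which an ordering assumed only on $\mathbb{C}_1\setminus\mathbb{C}_2$ does not by itself give.
\item The inclusion $\mathcal{RA}_{Q;\mu}\subseteq\mathbb{C}_1$ is a simpler route to compactness than the paper's Lipschitz-in-$t$ argument.
\end{itemize}
Non-monotone $\lambda_k\to\lambda$ reduce to your two cases, because every subsequence has a monotone further subsequence.

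The gap is exactly the one you flag. For $\lambda_k\downarrow\lambda$ you need every $x$ with $V_\lambda(0,x,Q)=0$ to be a limit of points where $V_\lambda(0,\cdot,Q)<0$. The paper's lower-hemicontinuity step asserts precisely this and justifies it only by continuity of $V_\phi(0,\cdot,Q)$, which does not imply it. Your plan to prove it by spending slightly less budget cannot work, because (iii) is false in this case without an extra hypothesis. Here is a counterexample.
\begin{itemize}
\item Take $n=1$, $f\equiv-1$ (inputs irrelevant), $\mathbb{C}_1=[-10,10]$, $\mathbb{C}_2=[-10,0]\cup[2,10]$, $\mathbb{T}=[-1,0]$ and $T=20$.
\item Let $g(x)=\min\{\mathrm{dist}(x,\mathbb{C}_2),1\}$, so $\int_0^2 g=1$, and $\ell_\mu=\bigl(1+\min\{1,(\mu-1)^+\}\bigr)g$. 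This family lies in $\mathcal{L}$ with $M=L=2$ and modulus $\omega(r)=r$.
\item Set $Q=1$, $\lambda=1$ and $\lambda_k=1+1/k$.
\end{itemize}
Under $\ell_1$, every start in $[2,10]$ drifts through $\mathbb{C}_2$ at zero cost, crosses the gap $(0,2)$ at cost exactly $1=Q$, and lands at $0\in\mathbb{T}\cap\mathbb{C}_2$ by time $10\le T$. Hence $\mathcal{RA}_{1;1}=[-1,10]$ and $V_1(0,\cdot,1)\equiv0$ on $[2,10]$. This is a plateau on which no nearby start is cheaper. Positivity of $\ell$ off $\mathbb{C}_2$ does not help, since these starts lie in $\mathbb{C}_2$.

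Under $\ell_{\lambda_k}$ the same crossing costs $1+1/k>Q$. So $\mathcal{RA}_{1;\lambda_k}\subseteq[-1,2)$, and $d_H(\mathcal{RA}_{1;\lambda_k},\mathcal{RA}_{1;1})\ge 8$ for every $k$.

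So a hypothesis like your nondegeneracy condition $\{V_\lambda(0,\cdot,Q)\le0\}=\overline{\{V_\lambda(0,\cdot,Q)<0\}}$ is unavoidable rather than a convenience. With it, your proof of (iii) is complete. Without it, (iii) is guaranteed only for sequences with $\lambda_k\le\lambda$.
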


\begin{proof}
See Appendix~\ref{apx:Thm2}.
\end{proof}

Theorem~\ref{thm:RA-mono} establishes that the soft-constrained reach-avoid set \(\mathcal{RA}_{Q;\lambda}\) is monotone with respect to the violation cost parameter \(\lambda\): as \(\lambda\) increases, \(\mathcal{RA}_{Q;\lambda}\) shrinks, whereas as \(\lambda\) decreases, \(\mathcal{RA}_{Q;\lambda}\) can only expand.
This result is intuitive. Increasing \(\lambda\) assigns a larger violation cost to states in the degraded region \(\mathbb{C}_1 \setminus \mathbb{C}_2\), which in turn increases the cumulative violation cost along trajectories that traverse this region. As a result, fewer states admit a robust policy that guarantees reaching the landing set while satisfying the hard and soft safety constraints, yielding a smaller \(\mathcal{RA}_{Q;\lambda}\). Conversely, when \(\lambda\) is reduced, operations in \(\mathbb{C}_1 \setminus \mathbb{C}_2\) are penalized less severely, so additional states may become feasible, thereby allowing
\(\mathcal{RA}_{Q;\lambda}\) to expand outward.

In addition, Theorem~\ref{thm:RA-mono} shows that the mapping \(\lambda \mapsto \mathcal{RA}_{Q;\lambda}\) is continuous in the Hausdorff metric. This property precludes abrupt and substantial changes in the geometry of \(\mathcal{RA}_{Q;\lambda}\) under small variations of \(\lambda\). In particular, as \(\lambda\) varies, the boundary of \(\mathcal{RA}_{Q;\lambda}\) evolves continuously in the Hausdorff sense, without jumps or discontinuities.

We next examine the asymptotic behavior of the soft-constrained reach–avoid set induced by the family $\{\ell_\lambda\}_{\lambda\in\Lambda} \in \mathcal{L}$. In particular, our objective is to characterize the limiting set and to investigate the notion of set convergence under which this limit arises. For this analysis, we take $\Lambda = [0,\infty)$ and consider the limit as $\lambda$ goes to infinity.

%%%%%%%%%%%%%%%%%%%%%%%%%%%%%%%%%%%%%%%%%%%%%%%%%%%%%%%%%%%%%%%%%%%%%%%%%%%%%%%%%%%%%%%%%%%%%%%%%%%%%%%%%%%%%%%%%%%%%%%%%%%%%%%%%%%%%%%%%%%%%%%%%%%%%%%%%%%%%%%%%%%%%%%%%%%%%%%%%%%%%%%%%%
%%%%%%%%%%%%%%%%%%%%%%%%%%%%%%%%%%%%%%%%%%%%%%%%%%%%%%%%%%%%%%%%%%%%%%%%%%%%%%%%%%%%%%%%%%%%%%%%%%%%%%%%%%%%%%%%%%%%%%%%%%%%%%%%%%%%%%%%%%%%%%%%%%%%%%%%%%%%%%%%%%%%%%%%%%%%%%%%%%%%%%%%%%

Let \(\{\ell_\lambda\}_{\lambda\in[0,\infty)} \subset \mathcal{L} \)
be a parameterized family of violation cost functions.
Let $\ell_{\infty}$ denote the pointwise limit of $\ell_\lambda$ as $\lambda \to \infty$, i.e.,
\(
\ell_{\infty}(t,x) \coloneqq \lim_{\lambda\to\infty} \ell_\lambda(t,x).
\)
Let $\mathcal{RA}_{Q;\lambda}$ and $\mathcal{RA}_{Q;{\infty}}$ denote the soft-constrained reach-avoid sets corresponding to the cost functions $\ell_\lambda$ and $\ell_{\infty}$, respectively.
\begin{theorem}\label{prop:penalty_function1}
Let $\lambda \in [0,\infty)$ and $Q \in [0,\infty)$. Then
\begin{equation}
\mathcal{RA}_{Q;\lambda} \;\supseteq\; \mathcal{RA}_{Q;{\infty}},
\label{eq:hausdorff_conv}
\end{equation}
and furthermore,
\begin{equation}
\mu\!\left(\mathcal{RA}_{Q;\lambda} \triangle \mathcal{RA}_{Q;{\infty}}\right)
\;\xrightarrow[\lambda\to\infty]{}\; 0,
\label{eq:measure_conv}
\end{equation}
where $\triangle$\footnote{For sets $A,B \subset \mathbb{R}^n$, $A \triangle B \coloneqq (A \setminus B) \cup (B \setminus A)$.}
denotes symmetric difference and $\mu$ is the Lebesgue measure on $\mathbb{R}^n$.
\end{theorem}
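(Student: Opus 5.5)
The plan is to prove the inclusion~\eqref{eq:hausdorff_conv} directly from Definition~\ref{soft_constrained_reach_avoid_set}. For~\eqref{eq:measure_conv} I would combine the monotonicity in Theorem~\ref{thm:RA-mono} with continuity of Lebesgue measure, which leaves a single limit-identification step as the real work.

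\emph{Inclusion.} By Definition~\ref{def:class_of_families}(ii), for each $(t,x)\in[0,T]\times(\mathbb{C}_1\setminus\mathbb{C}_2)$ the map $\lambda\mapsto\ell_\lambda(t,x)$ is nondecreasing. Hence $\ell_\infty(t,x)=\sup_\lambda \ell_\lambda(t,x)\ge \ell_\lambda(t,x)$ there, possibly with value $+\infty$. On $\mathbb{C}_2$ both functions vanish by Definition~\ref{def:class_of_families}(i). Now take $x\in\mathcal{RA}_{Q;\infty}$ and any $\delta\in\Delta$, and let $(\bm a,\tau)$ be the witness from~\eqref{eq:RA_soft_time_dep_family} for $\ell_\infty$. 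The hard constraint keeps $\phi_x^{\bm a,\delta[\bm a]}(s)\in\mathbb{C}_1$ on $[0,\tau]$. Comparing integrands pointwise along this trajectory therefore gives $J_\lambda(x,\tau;\bm a,\delta)\le J_\infty(x,\tau;\bm a,\delta)\le Q$. The remaining conditions in the definition do not involve $\ell$, so the same witness certifies $x\in\mathcal{RA}_{Q;\lambda}$. I would phrase this at the level of sets rather than through $V_\infty$, since $\ell_\infty$ may be discontinuous or unbounded and Theorem~\ref{final_theorem} need not apply to it.

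\emph{Measure convergence.} By Theorem~\ref{thm:RA-mono}, the sets $\mathcal{RA}_{Q;\lambda}$ are nonincreasing in $\lambda$. All of them lie in the compact set $\mathbb{C}_1$, so each has finite measure. Let $\mathcal{R}_\ast\coloneqq\bigcap_{\lambda\ge0}\mathcal{RA}_{Q;\lambda}=\bigcap_{k\in\mathbb{N}}\mathcal{RA}_{Q;k}$. Continuity of measure from above then gives $\mu(\mathcal{RA}_{Q;\lambda}\setminus\mathcal{R}_\ast)\to0$. The inclusion already shows $\mathcal{RA}_{Q;\infty}\subseteq\mathcal{R}_\ast$, so the symmetric difference equals $\mathcal{RA}_{Q;\lambda}\setminus\mathcal{RA}_{Q;\infty}$. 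It therefore suffices to show $\mu(\mathcal{R}_\ast\setminus\mathcal{RA}_{Q;\infty})=0$.

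\emph{Main obstacle: identifying the limit.} The difficulty is that for $x\in\mathcal{R}_\ast$ and fixed $\delta$, the witnessing control $\bm a_\lambda$ and time $\tau_\lambda$ depend on $\lambda$. So one cannot simply pass to the limit with a fixed trajectory. I would first try the following route. The trajectories $\phi_x^{\bm a_k,\delta[\bm a_k]}$ are equi-Lipschitz because $f$ is bounded. Extract a uniformly convergent subsequence with $\tau_k\to\bar\tau$, passing to relaxed controls if needed to obtain a limit trajectory. Then use Fatou's lemma together with the monotone convergence $\ell_k\uparrow\ell_\infty$ (in the standard form $\liminf_k\int\ell_k(s,\mathrm{x}_k)\ge\int\ell_j(s,\bar{\mathrm{x}})$ for every fixed $j$, then $j\to\infty$) to conclude $J_\infty\le Q$ along the limit. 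Closedness of $\mathbb{C}_1$, $\mathbb{C}_2$ and $\mathbb{T}$ then transfers the other constraints. The nonanticipative dependence $\delta[\bm a_k]$ prevents this argument from being fully exact. For that reason I would aim for the weaker conclusion that $\mathcal{R}_\ast\setminus\mathcal{RA}_{Q;\infty}$ is contained in a boundary-type set, either $\partial\mathcal{RA}_{Q;\infty}$ or the set where the increasing limit $\bar V=\lim_\lambda V_\lambda(0,\cdot,Q)$ satisfies $\bar V=0$, and then argue that this set is Lebesgue-null. For the nullness I would try the Lipschitz regularity of each $V_\lambda$ (Theorem~\ref{final_theorem}), which makes level sets of the monotone limit thin. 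Failing that, a fallback uses monotonicity of $\mathcal{RA}_{Q;\infty}$ in $Q$: the exceptional set then sits in $\bigcap_{Q'>Q}\mathcal{RA}_{Q';\infty}\setminus\mathcal{RA}_{Q;\infty}$, whose measure vanishes by continuity from above in $Q'$.
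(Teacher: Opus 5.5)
Your inclusion argument is correct. Because it works witness-by-witness for each $\delta$ and never passes through $V_\infty$, it is if anything more careful than the paper's appeal to the proof of Theorem~\ref{thm:RA-mono}. Your reduction of \eqref{eq:measure_conv} also matches the paper's skeleton: monotonically decreasing sets of finite measure, continuity from above, and the inclusion to rewrite the symmetric difference. The gap is the step you flag yourself. Nothing in the proposal shows $\mu(\mathcal{R}_\ast\setminus\mathcal{RA}_{Q;\infty})=0$, and none of your three routes works as stated.

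\emph{Compactness/Fatou route.} This fails for the reason you give. A nonanticipative $\delta$ need not be continuous under (relaxed) convergence of the $\bm a_k$, so the limit trajectory need not come from any admissible pair $(\bar{\bm a},\delta[\bar{\bm a}])$. In addition, relaxed limits need not lie in $\mathcal{A}$.

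\emph{Level-set route.} Lipschitz continuity gives no control on the measure of a level set, and here those level sets can be fat. For $Q=0$ the budget term $\int_0^\tau\ell_\lambda\,ds\ge 0$ forces $V_\lambda(0,\cdot,0)\ge0$. Hence $\bar V\ge0$ and $\{\bar V=0\}\supseteq\mathcal{R}_\ast\supseteq\mathbb{T}\cap\mathbb{C}_2$, which has positive volume in the paper's example. Moreover, on the exceptional set you only know $\bar V\le 0<V_\infty$, not $\bar V=0$, and you give no reason it lies in $\partial\mathcal{RA}_{Q;\infty}$.

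\emph{$Q$-fallback.} This misuses continuity from above. That property gives $\mu(\mathcal{RA}_{Q';\infty})\to\mu\bigl(\bigcap_{Q'>Q}\mathcal{RA}_{Q';\infty}\bigr)$, but says nothing about $\bigcap_{Q'>Q}\mathcal{RA}_{Q';\infty}\setminus\mathcal{RA}_{Q;\infty}$. The inclusion $\mathcal{R}_\ast\subseteq\mathcal{RA}_{Q';\infty}$ is also unjustified. Under merely pointwise convergence, a $\lambda$-witness gives no bound on $\int\ell_\infty$; for \eqref{eq:penalty_exp_piecewise}, $\ell_\infty-\ell_\lambda$ is close to $1$ just outside $\partial\mathbb{C}_2$ for every $\lambda$.

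The paper supplies the missing identification by importing from~\cite{mballo} the pointwise convergence $V_\lambda(0,x,Q)\to V_\infty(0,x,Q)$ for every $x$. Combined with $V_{\lambda_k}\le V_\infty$, this yields $\bigcap_k\mathcal{RA}_{Q;\lambda_k}=\mathcal{RA}_{Q;\infty}$ exactly, with no null-set bookkeeping. You need that result or a proof of it; proving it is exactly the interchange of $\inf_{\bm a}$ with the monotone $\sup_\lambda$ that your compactness argument could not deliver.

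Your fallback can be repaired only in a narrower regime. Suppose $\eta_\lambda\coloneqq\int_0^T\sup_{y\in\mathbb{C}_1}(\ell_\infty-\ell_\lambda)(s,y)\,ds\to0$. The uniform Lipschitz clause of Definition~\ref{def:class_of_families}(iii), read literally on $\Lambda=[0,\infty)$, does imply this, by Dini's theorem in $y$ and dominated convergence in $s$. Then $\mathcal{R}_\ast\subseteq\mathcal{RA}_{Q+\eta_\lambda;\infty}$ for every $\lambda$. The right closing tool is then that $V_\infty(0,x,\cdot)$ is $1$-Lipschitz in the budget, so $\bigcap_{Q'>Q}\{V_\infty(0,\cdot,Q')\le0\}=\{V_\infty(0,\cdot,Q)\le0\}$, and Proposition~\ref{prop_1} gives $\mathcal{R}_\ast=\mathcal{RA}_{Q;\infty}$. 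However, that regime makes $\ell_\infty$ Lipschitz in $x$, so it excludes the discontinuous limits you explicitly set out to cover.
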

\begin{proof}
See Appendix~\ref{prop:penalty_function1a}.
\end{proof}

% Let \(\{\ell_\lambda\}_{\lambda\in[0,\infty)}\) be any family in \(\mathcal{L}\).
% Let \(h\) denote the pointwise limit of \(\ell_\lambda\) as \(\lambda \to \infty\), i.e.,
% \(
% h(t,x) \coloneqq \lim_{\lambda\to\infty}\ell_\lambda(t,x).
% \)
% We denote by \(\mathcal{RA}_{Q;h}\) the soft-constrained reach-avoid set obtained by using
% \(h\) as the violation cost function.

% \begin{theorem}\label{prop:penalty_function1}
% Let \(\lambda \in [0, \infty)\), \(x\in\mathbb{R}^{n}\), and \(Q\in[0,\infty)\). Then
% \begin{equation}
% \mathcal{RA}_{Q;\,\lambda} \;\supseteq\; \mathcal{RA}_{Q;\,h},
% \label{eq:hausdorff_conv}
% \end{equation}
% and furthermore,
% \begin{equation}
% \mu\!\left(\,\mathcal{RA}_{Q;\,\lambda}\,\triangle\,\mathcal{RA}_{Q;\,h}\,\right)
% \;\xrightarrow[\lambda\to\infty]{}\;0,
% \label{eq:measure_conv}
% \end{equation}
% where $\mu(.)$ denotes the Lebesgue measure on $\mathbb{R}^{n}$.~\footnote{For sets $A,B \subset \mathbb{R}^n$, $A \triangle B \coloneqq (A \setminus B) \cup (B \setminus A)$.}
% % \footnote{The symbol
% % \(\triangle\) denotes the symmetric difference of two sets,
% % i.e., \(\mathcal{RA}_{Q;\,\lambda} \triangle \mathcal{RA}_{Q;\,h} = (\mathcal{RA}_{Q;\,\lambda} \setminus \mathcal{RA}_{Q;\,h}) \cup (\mathcal{RA}_{Q;\,h} \setminus \mathcal{RA}_{Q;\,\lambda})\).
% % }
% \end{theorem}
Theorem~\ref{prop:penalty_function1} shows that pointwise convergence of $\ell_\lambda$ to $\ell_{\infty}$ is sufficient to ensure that, as $\lambda \to \infty$, the volume of the region in the state space where $\mathcal{RA}_{Q;\lambda}$ and $\mathcal{RA}_{Q;{\infty}}$ differ vanishes. In this sense, $\mathcal{RA}_{Q;\lambda}$ provides an over-approximation of $\mathcal{RA}_{Q;{\infty}}$. This approximation becomes especially valuable when the intended violation cost function (e.g. \(\ell_{\infty}\)) is discontinuous, as the corresponding value function inherits this discontinuity. The standard Hamilton-Jacobi theory~\cite{crandall1983} used to derive Theorem~\ref{final_theorem} cannot be applied to discontinuous value functions. From a computational standpoint, a discontinuous value function leads to numerical instability when computing the soft-constrained reach-avoid set.

The following remark illustrates this case by introducing a practically relevant discontinuous violation cost function \(\ell_{\infty}\) and a family \(\{\ell_{\lambda}\}_{\lambda\in[0,\infty)} \in \mathcal{L}\) that converges pointwise to \(\ell_{\infty}\), and therefore can be used to approximate \(\mathcal{RA}_{Q;{\infty}}\).

\begin{remark}\label{remark}
\medskip
Let \(\ell_{\infty}:[0,T]\times\mathbb{R}^n \to [0,\infty)\) be a violation cost function defined by
\begin{equation}\label{eq:ell_inf_structure}
\ell_{\infty}(s,x)
\coloneqq
\begin{cases}
0, & x\in \mathbb{C}_{2},\\[2mm]
M, & x\in \mathbb{C}_{2}^{\,c},
\end{cases}
\qquad M\in(0,\infty].
\end{equation}
\begin{enumerate}
\item If \(M\) is finite, \(\mathcal{RA}_{Q;\,\ell_{\infty}}\) generalizes the soft-constrained formulation of~\cite{mballo}. In particular, the choice \(M=1\) recovers the indicator-type penalty used in~\cite{mballo}, which penalizes time spent operating in the degraded region.

\item If \(M=\infty\), then any operation in the degraded region incurs infinite cost and is therefore inadmissible. Consequently, \(\mathcal{RA}_{Q;\,\ell_{\infty}}\) coincides with the classical hard-constrained reach-avoid set, which enforces a purely binary notion of safety.
\end{enumerate}
\end{remark}

Among the many possible constructions, we consider the parameterized family \(\{\ell_\lambda\}_{\lambda \ge 0}\) defined by
\begin{equation}\label{eq:penalty_exp_piecewise}
\ell_{\lambda}(s, x) =
\begin{cases}
0, & x \in \mathbb{C}_{2}, \\[6pt]
1 - \exp\!\Big(-K(\lambda)\dfrac{d_{\mathbb{C}_{2}}(x)}{\alpha}\Big),
& x \in \mathbb{C}_{2}^{\,c},
\end{cases}
\end{equation}
where \(\alpha\) denotes the Hausdorff distance between \(\mathbb{C}_{1}\) and \(\mathbb{C}_{2}\) with respect to the \(L_\infty\) norm, and \(K:[0,\infty)\to(0,\infty)\) is a positive, nondecreasing polynomial function of \(\lambda\). This family is used in the simulation experiments of Section~\ref{sec:landing_app}.

This family is motivated by applications in which the severity of off-nominal operation increases with the distance from the nominal region \(\mathbb{C}_2\), whereas sufficiently small excursions outside \(\mathbb{C}_2\) pose only a minor safety concern. For example, the nominal region of operation may be defined to include a safety margin from the aerodynamic stall boundary. In such a setting, slight departures from the nominal region may have little effect on stall risk, whereas larger excursions bring the aircraft closer to stall, where nonlinear aerodynamic effects become increasingly pronounced and adverse. The exponential form of the cost function captures this relationship by assigning near-zero cost to small deviations and causing the cost to increase rapidly toward its maximum as the distance from \(\mathbb{C}_2\) grows. The function \(K(\lambda)\) determines how rapidly this transition occurs. For a given application, \(\ell_\lambda\) may be designed so that its shape reflects how the severity of off-nominal operation varies across the state space, while \(K(\lambda)\) controls how conservative the resulting cost is.

One may readily verify that the family \(\{\ell_\lambda\}_{\lambda\ge 0}\) defined in \eqref{eq:penalty_exp_piecewise} belongs to \(\mathcal{L}\) and that \(\ell_\lambda\) converges pointwise to the discontinuous cost \(h\) as \(\lambda\to\infty\). More broadly, by allowing a general class of cost functions, the proposed framework accommodates a wide range of practically relevant violation cost profiles, thereby enabling application-specific choices without modifying the underlying framework.

\section{Synthesis via Parametric Tuning of the Emergency Landing Flight Envelope}\label{tuning}

This section translates the sensitivity results of Section~\ref{sec:sensitivity} into a synthesis procedure for selecting the violation cost parameter so as to satisfy an explicit safety requirement on off-nominal operation while maximizing operational capability. This synthesis procedure uses the performance metric $P$ and the safety metric $S$ to enforce a desired balance between operation in the nominal and degraded regimes.
\subsection{Performance and safety metrics}\label{subsec:PS_metrics}

Let $\mathcal{K}(\mathbb{R}^n)$ denote the collection of compact subsets of $\mathbb{R}^n$. We consider two functionals
$P,S:\mathcal{K}(\mathbb{R}^n)\to\mathbb{R}$ that provide, respectively, performance and safety measures for the soft-constrained reach-avoid set $\mathcal{RA}_{Q;\lambda}$. In particular, $P(\mathcal{RA}_{Q;\lambda})$ measures the operational capability retained by $\mathcal{RA}_{Q;\lambda}$ under the violation cost function $\ell_\lambda$. Typical examples of the operational capability quantified by such a metric include the maximum recoverable altitude, admissible airspeed range, or the volume of $\mathcal{RA}_{Q;\lambda}$. On the other hand, $S(\mathcal{RA}_{Q;\lambda})$ quantifies the degree to which a safe landing relies on off-nominal regimes, such as operation within a degraded region. For instance, $S(\mathcal{RA}_{Q;\lambda})$ may measure the fraction (or volume) of the reach-avoid set that lies within a degraded region of the state space. Alternatively, $S(\mathcal{RA}_{Q;\lambda})$ may capture the worst-case accumulated exposure to degraded operation along any guaranteed landing trajectory starting from $\mathcal{RA}_{Q;\lambda}$.

We impose only mild structural assumptions on these functionals, as stated in Assumption~\ref{assump:PS_properties}.

\begin{assumption}\label{assump:PS_properties}
The functionals $P,S:\mathcal{K}(\mathbb{R}^n)\to\mathbb{R}$ satisfy:
\begin{enumerate}
\item[(i)] \textbf{Monotonicity:} if $\mathcal{A}\subseteq\mathcal{B}$, then
$P(\mathcal{A}) \le P(\mathcal{B})$ and $S(\mathcal{A}) \le S(\mathcal{B})$.
\item[(ii)] \textbf{Continuity:} $P$ and $S$ are continuous with respect to the Hausdorff metric on $\mathcal{K}(\mathbb{R}^n)$.
\end{enumerate}
\end{assumption}

\subsection{Synthesis problem}\label{subsec:synthesis_problem}

Given a prescribed safety requirement $\bar S$, we seek to select $\lambda$ so as to maximize operational capability while ensuring that reliance on off-nominal operation does not exceed the specified tolerance:
\begin{equation}\label{eq:synthesis_opt}
\lambda^\star \in \arg\max_{\lambda \in \Lambda} P(\mathcal{RA}_{Q;\lambda})
\quad \text{s.t.} \quad
S(\mathcal{RA}_{Q;\lambda}) \le \bar S .
\end{equation}

We assume that the prescribed safety requirement $\bar S$ is feasible, i.e., there exists at least one $\lambda \in \Lambda$ such that $S(\mathcal{RA}_{Q;\lambda}) \le \bar S$. The optimization problem~\eqref{eq:synthesis_opt} admits a particularly simple characterization. The key observation is that both the reachable set $\mathcal{RA}_{Q;\lambda}$ and the safety metric $S(\mathcal{RA}_{Q;\lambda})$ vary monotonically with the violation cost parameter $\lambda$. As a result, the safety constraint defines a one-dimensional feasible set that is closed and connected, reducing the synthesis problem to a maximization over this interval.

By Theorem~\ref{prop:char-Wn}, the set-valued map $\lambda \mapsto \mathcal{RA}_{Q;\lambda}$ is continuous with respect to the Hausdorff metric. Under Assumption~\ref{assump:PS_properties}, the set functional $S(\cdot)$ is continuous on $\mathcal{K}(\mathbb{R}^n)$, and hence the induced scalar map $\lambda \mapsto S(\mathcal{RA}_{Q;\lambda})$ is continuous on $\Lambda$. It follows that the feasible set
\begin{equation}\label{eq:feasible_set}
\Lambda_{\mathrm{feas}}(\bar S)
\coloneqq \{\lambda \in \Lambda \mid S(\mathcal{RA}_{Q;\lambda}) \le \bar S\}
\end{equation}
is closed, as it is the preimage of the closed set $(-\infty,\bar S]$ under a continuous map. Moreover, since $\mathcal{RA}_{Q;\lambda}$ is monotone decreasing in $\lambda$ and $S(\cdot)$ is monotone increasing with respect to set inclusion, the map $\lambda \mapsto S(\mathcal{RA}_{Q;\lambda})$ is nonincreasing. Therefore, $\Lambda_{\mathrm{feas}}(\bar S)$ is a closed interval.

Consequently,~\eqref{eq:synthesis_opt} is equivalent to the following optimization over the feasible set $\Lambda_{\mathrm{feas}}(\bar S)$:

\begin{equation}\label{eq:synthesis_opt_feas}
\lambda^\star \in \arg\max_{\lambda \in \Lambda_{\mathrm{feas}}(\bar S)}
P(\mathcal{RA}_{Q;\lambda}).
\end{equation}
Since $P(\mathcal{RA}_{Q;\lambda})$ is nonincreasing in $\lambda$ and $\Lambda_{\mathrm{feas}}(\bar S)$ is a closed interval, an optimal solution is attained at the smallest feasible parameter (corresponding to the least conservative choice), i.e.,

\begin{equation}\label{eq:lambda_star}
\lambda^\star
=
\min \Lambda_{\mathrm{feas}}(\bar S).
\end{equation}

\subsection{Synthesis Algorithm}
\label{subsec:synthesis_problem}

Equation~\eqref{eq:lambda_star} reduces the synthesis problem~\eqref{eq:synthesis_opt} to identifying the smallest value of $\lambda$ that satisfies the prescribed safety requirement. Because the feasible set $\Lambda_{\mathrm{feas}}(\bar S)$ is a one-dimensional closed and connected set, and the map $\lambda \mapsto S(\mathcal{RA}_{Q;\lambda})$ is continuous and monotone, $\lambda^{*}$ can be computed reliably using a one-dimensional bracketing search (e.g., bisection) with guaranteed convergence.

To this end, define the scalar function
\begin{equation}\label{eq:g_lambda}
g(\lambda) \coloneqq S(\mathcal{RA}_{Q;\lambda}) - \bar S .
\end{equation}

\noindent We define the zero-level set of $g$ as
\begin{equation}\label{eq:zero_level_set}
\mathcal{Z}
\coloneqq
\{\lambda \in \Lambda \mid g(\lambda)=0\}.
\end{equation}
Since the map $\lambda \mapsto S(\mathcal{RA}_{Q;\lambda})$ is continuous and monotone nonincreasing, the set $\mathcal{Z}$ is a closed interval of $\mathbb{R}$. This interval corresponds to a plateau on which the safety constraint is exactly satisfied.

Let $\underline{\lambda} \coloneqq \min \mathcal{Z}$ and $\overline{\lambda} \coloneqq \max \mathcal{Z}$. Then, the monotonicity of $g$ implies the following sign structure:
\begin{equation}\label{eq:g_sign_structure}
\begin{aligned}
g(\lambda) &> 0 \quad &&\text{for all } \lambda < \underline{\lambda},\\
g(\lambda) &= 0 \quad &&\text{for all } \lambda \in
[\underline{\lambda},\overline{\lambda}],\\
g(\lambda) &< 0 \quad &&\text{for all } \lambda > \overline{\lambda}.
\end{aligned}
\end{equation}

Therefore, the feasible set can be expressed as
\begin{equation}\label{eq:feasible_set_g}
\Lambda_{\mathrm{feas}}(\bar S)
=
\{\lambda \in \Lambda \mid g(\lambda)\le 0\}
=
[\underline{\lambda},\infty)\cap\Lambda.
\end{equation}
In particular, the smallest feasible parameter coincides with the left endpoint
of the zero-level set, i.e.,
\begin{equation}\label{eq:lambda_star_from_g}
\lambda^\star
=
\min \Lambda_{\mathrm{feas}}(\bar S)
=
\underline{\lambda}.
\end{equation}

In practice, $\underline{\lambda}$ can be computed in two steps using standard one-dimensional bracketing methods (e.g., the bisection method or the Chandrupatla method~\cite{Chandrupatla1997}). First, apply the bisection method to an interval $[\lambda_{\min},\lambda_{\max}]\subset\Lambda$ satisfying $g(\lambda_{\min})>0$ and $g(\lambda_{\max})<0$ to obtain a parameter $\tilde{\lambda}\in\mathcal{Z}$, i.e., $g(\tilde{\lambda})=0$; the continuity of $g$ and the sign pattern in~\eqref{eq:g_sign_structure} guarantee that bisection converges to such a parameter. Second, since the zero-level set $\mathcal{Z}$ is (potentially) an interval, a second bisection on $[\lambda_{\min},\tilde{\lambda}]$ is used to locate its left endpoint $\underline{\lambda}=\min\mathcal{Z}$, which equals $\lambda^\star$. This procedure is summarized in Algorithm~\ref{alg:synthesis}. 

\begin{algorithm}[H]
\caption{Synthesis of the Violation-Cost Parameter $\lambda^\star$}\label{alg:synthesis}
\begin{algorithmic}
\STATE \textbf{Input:} Safety tolerance $\bar S$, search interval $[\lambda_{\min},\lambda_{\max}]\subset\Lambda$
\STATE \textbf{Output:} Synthesized parameter $\lambda^\star$
\STATE 
\STATE Define $g(\lambda) \coloneqq S(\mathcal{RA}_{Q;\lambda}) - \bar S$
\STATE Define $\mathcal{Z} \coloneqq \{\lambda \in \Lambda \mid g(\lambda)=0\}$
\STATE
\STATE \textbf{Step 1: Find a point in the zero-level set}
\STATE \hspace{0.5cm}Apply bisection on $[\lambda_{\min},\lambda_{\max}]$ to obtain $\tilde{\lambda}\in\mathcal{Z}$
\STATE
\STATE \textbf{Step 2: Locate the left endpoint}
\STATE \hspace{0.5cm}Apply bisection on $[\lambda_{\min},\tilde{\lambda}]$ to locate $\underline{\lambda}\coloneqq\min\mathcal{Z}$
\STATE \hspace{0.5cm}Set $\lambda^\star \gets \underline{\lambda}$
\STATE
\STATE \textbf{return} $\lambda^\star$
\end{algorithmic}
\end{algorithm}

\section{Application: Safe Landing of a Fixed-wing Aircraft}
\label{sec:landing_app}

In this section, we compute the emergency landing flight envelope for a fixed-wing aircraft under propulsion failure using the cost function~\eqref{eq:penalty_exp_piecewise}. Specifically, we evaluate \(\mathcal{RA}_{Q;\lambda}\) for different values of \(\lambda\) to illustrate the conclusions of Theorem~\ref{thm:RA-mono} and Theorem~\ref{prop:penalty_function1}, and thus demonstrate the sensitivity analysis developed in Section~\ref{sec:sensitivity}. This numerical study also illustrates the synthesis procedure in Algorithm~\ref{alg:synthesis}.

\subsection{Problem Setup}
We consider the fixed-wing point-mass model with propulsion failure introduced in~\cite{mballo}.
\begin{equation}
\begin{bmatrix}
    \dot{V_a} \\
    \dot{\gamma} \\
    \dot{h}
\end{bmatrix}
=
\frac{1}{M}
\begin{bmatrix}
    -D(\alpha, V_a) - M g_0 \sin(\gamma) +  F_{\text{dist}} \\
    \frac{L(\alpha, V_a)}{V} - \frac{M g_0}{V_a} \cos(\gamma) \\
    MV\sin(\gamma) 
\end{bmatrix}.
\label{Aircraft_dynamics}
\end{equation}

\noindent Here, \(V_a\) is the airspeed \((\mathrm{m/s})\), \(\gamma\) the flight path angle \((\mathrm{deg})\), and \(h\) the altitude \((\mathrm{m})\); \(M\) denotes the aircraft mass \((\mathrm{kg})\), \(g_0\) the gravitational acceleration \((\mathrm{m/s}^2)\), and \(\alpha\) the angle of attack \((\mathrm{deg})\). The lift and drag forces are denoted by \(L\) and \(D\), respectively, both measured in newtons \((\mathrm{N})\), and \(F_{\mathrm{dist}}\) \((\mathrm{N})\) represents exogenous disturbances such as wind. All parameter values for this aircraft model are given in~\cite{bayen}.

The admissible and nominal operating regions are denoted by \(\mathbb{C}_1\) and \(\mathbb{C}_2\), respectively. These sets satisfy \(\mathbb{C}_2 \subset \mathbb{C}_1\), as defined in~\eqref{eq:constraint} and~\eqref{eq:constraint1}. We compute the emergency landing flight envelope over \(h \in [0,36]\), corresponding to a low-altitude propulsion-loss scenario. Higher-altitude cases can be analyzed by extending the range of \(h\).

% We restrict the altitude to $h\in[0,36]$ to reflect the assumption that the propulsion-loss event occurs at most $36\,\mathrm{m}$ above ground level.
\begin{equation}\label{eq:constraint}
\mathbb{C}_1 \!=\!\{(V_a,\gamma,h)\mid 61 \le V_a \le 84,\; -3 \le \gamma \le 0,\; 0 \le h \le 36\}.
\end{equation}
\begin{equation}\label{eq:constraint1}
\mathbb{C}_2\!=\!\{(V_a,\gamma,h)\mid 66 \le V_a \le 79,\; -3 \le \gamma \le 0,\; 0 \le h \le 36\}.
\end{equation}

The boundary of the admissible operating region, \(\mathbb{C}_1\), is determined by aerodynamic and structural considerations, such as stall limits and structural integrity constraints. Operation outside \(\mathbb{C}_1\) is unsafe, whereas operation within the nominal region maintains a margin from the boundary of \(\mathbb{C}_1\), as suggested by the Federal Aviation Administration~\cite{FAR1990}. This margin offers multiple benefits: it reduces stall risk and helps preserve the aircraft's structural health by limiting prolonged operation near the boundary of \(\mathbb{C}_1\). For this application, an exponential distance-based cost function such as~\eqref{eq:penalty_exp_piecewise} is well suited, since slight excursions outside the nominal region may still pose only a minor safety concern, whereas larger deviations correspond to increasingly severe off-nominal operation. 

In the absence of propulsive forces, the objective is to characterize the set of states from which the aircraft can achieve a safe touchdown with a vertical descent rate of at most $0.91~\mathrm{m/s}$, using the angle of attack as the sole control input. This touchdown condition is formalized by the target set
\begin{equation}\label{Target}
\begin{aligned}
\mathbb{T} \coloneqq \{(V_a,\gamma,h) \mid {}&
66 \le V_a \le 79,\; -0.62 \le \gamma \le 0,\\
& 0 \le h \le 0.5 \}.
\end{aligned}
\end{equation}
\noindent The set \(\mathbb{T}\) represents admissible touchdown states associated with a feasible landing zone on the ground.

To encode the safety specification for this problem, we enforce remaining within the admissible region of operation $\mathbb{C}_1$ as a hard constraint, while operation in the degraded region $\mathbb{C}_1 \setminus \mathbb{C}_2$ is permitted but incurs a cost specified by the violation cost function~\eqref{eq:penalty_exp_piecewise}. The soft safety constraint resulting from the cost function~\eqref{eq:penalty_exp_piecewise} captures the aircraft's ability to temporarily depart from nominal operating conditions in order to retain sufficient maneuverability to achieve a safe landing under this emergency flight condition.

\subsection{Computed Emergency Landing Flight Envelope}

The left panel of Fig.~\ref{RA_sets} shows the three-dimensional $(V_a,\gamma,h)$ state space considered in this study. The nominal region $\mathbb{C}_2$ is enclosed by a blue boundary, and the degraded region $\mathbb{C}_1\setminus\mathbb{C}_2$ is shown in light red with a boundary comprising both blue and red segments. The target set $\mathbb{T}$ (landing zone) is shown in gray with a black boundary.

\begin{figure*}[!t]
\centering
\begin{minipage}[t]{0.329\textwidth}
  \centering
  \includegraphics[width=\linewidth]{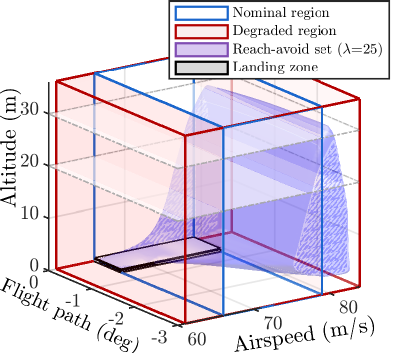}
\end{minipage}\hfill
\begin{minipage}[t]{0.329\textwidth}
  \centering
  \includegraphics[width=\linewidth]{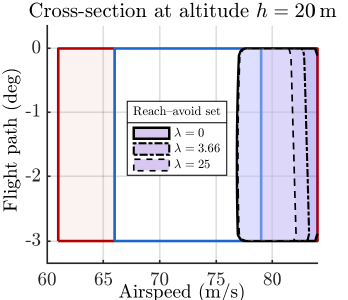}
\end{minipage}\hfill
\begin{minipage}[t]{0.329\textwidth}
  \centering
  \includegraphics[width=\linewidth]{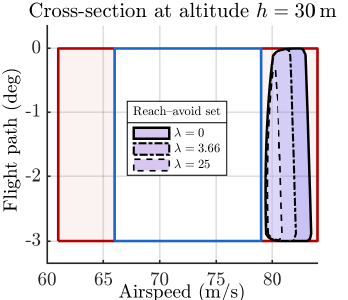}
\end{minipage}
\caption{Computed emergency landing flight envelope $\mathcal{RA}_{Q;\lambda}$ for $\lambda \in \{0,\,3.66,\,25\}$ with $Q=5$.
\emph{Left:} Three-dimensional $(V_a,\gamma,h)$ state space showing the nominal region $\mathbb{C}_2$ (closed region enclosed by the blue boundary), the degraded region $\mathbb{C}_1\setminus\mathbb{C}_2$ (light red region with a boundary comprising both blue and red segments), and the landing target set $\mathbb{T}$ (grey region with black boundary).
\emph{Middle and right:} Two-dimensional slices of $\mathcal{RA}_{Q;\lambda}$ (purple region with black boundary) at $h=20~\mathrm{m}$ and $h=30~\mathrm{m}$ for $\lambda \in \{0,\,3.66,\,25\}$, showing that the emergency landing flight envelope shrinks monotonically as $\lambda$ increases (Theorem~\ref{thm:RA-mono}).}
\label{RA_sets}
\end{figure*}

We use the HelperOC toolbox~\cite{Toolbox_2} to solve the HJI variational inequality in Theorem~\ref{final_theorem} for the augmented dynamics obtained by combining the aircraft model~\eqref{Aircraft_dynamics} with the violation cost function~\eqref{eq:penalty_exp_piecewise}, thereby computing the value function \(V_{\lambda}\). The computation is performed backward in time until numerical convergence is reached; in our simulations, this occurs after approximately \(10~\mathrm{s}\) of backward propagation. We then recover \(\mathcal{RA}_{Q;\lambda}\) via Proposition~\ref{prop:char-Wn}.

The left panel of Fig.~\ref{RA_sets} shows the soft-constrained reach-avoid set $\mathcal{RA}_{Q;\lambda}$ for $\lambda=25$ and $Q=5$ (purple region). States inside $\mathcal{RA}_{5;25}$ correspond to initial conditions from which the aircraft can reach the landing zone while remaining in the admissible region of operation $\mathbb{C}_1$, i.e., either in the nominal region $\mathbb{C}_2$ or in the degraded region $\mathbb{C}_1\setminus\mathbb{C}_2$. Moreover, landings from these initial conditions are guaranteed to incur an accumulated cost for operation in the degraded region that does not exceed the budget $Q=5$.

The middle and right panels of Fig.~\ref{RA_sets} further illustrate the structure of the soft-constrained reach-avoid set as $\lambda$ varies by showing two-dimensional slices at altitudes $h=20$ and $h=30$~m. For each altitude value, the corresponding slice of $\mathcal{RA}_{Q;\lambda}$ is shown as a purple region with a black boundary. The reduction in size of these slices as $\lambda$ increases visually corroborates the monotonicity property established in Theorem~\ref{thm:RA-mono}: larger values of $\lambda$ assign higher cost to departures from the nominal region, resulting in a strictly smaller reach-avoid set. In particular, as $\lambda$ increases from $0$ to $25$, the emergency landing flight envelope shrinks across all altitude slices, reflecting reduced tolerance for degraded operation. It is important to note that, at the higher-altitude slice ($h=30~\mathrm{m}$), the states from which a safe landing is achievable lie entirely within the degraded region for all $\lambda \in \{0,\,3.66,\,25\}$, indicating that safe landing from higher altitude is only possible by operating outside the nominal envelope. This observation highlights the importance of adopting a graded safety formulation that distinguishes nominal and degraded regimes and enables an appropriate balance of operation across them, rather than a purely binary safe-unsafe characterization.

The asymptotic behavior of $\mathcal{RA}_{Q;\lambda}$ as $\lambda\to\infty$ is governed by the pointwise convergence of the cost $\ell_\lambda$~\eqref{eq:penalty_exp_piecewise} on $\mathbb{C}_1$, as established by Theorem~\ref{prop:penalty_function1} and Remark~\ref{remark}. In the limit $\lambda\to\infty$, $\ell_\lambda$ converges pointwise on $\mathbb{C}_1$ to the indicator function $\mathbf{1}_{\mathbb{C}_1\setminus\mathbb{C}_2}$. Consequently, $\mathcal{RA}_{Q;\lambda}$ approaches, as $\lambda$ increases, the soft-constrained reach-avoid set introduced in~\cite{mballo}, with $\mathbf{1}_{\mathbb{C}_1\setminus\mathbb{C}_2}$ serving as the violation cost function.

To illustrate this asymptotic behavior, Fig.~\ref{convergence} plots the highest safe altitude of $\mathcal{RA}_{Q;\lambda}$ as a function of $\lambda$ for budgets $Q\in\{0,0.2,2.0,5.0,8.0,10.0\}$. For each $Q$, the resulting curve is monotone nonincreasing in $\lambda$ and converges to a finite limit consistent with the highest safe altitude of the soft-constrained reach-avoid set in~\cite{mballo}.

 \begin{figure}[!t]
\centering
\includegraphics[width=0.8\columnwidth]{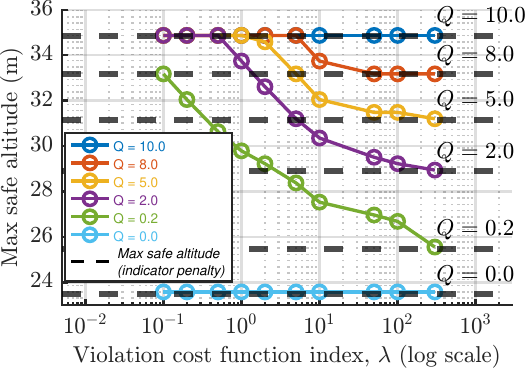}
\caption{Asymptotic behavior of the soft-constrained reach-avoid set $\mathcal{RA}_{Q;\lambda}$ as $\lambda$ increases. The plot reports the highest safe altitude of $\mathcal{RA}_{Q;\lambda}$ as a function of $\lambda$ for $Q\in\{0,0.2,2.0,5.0,8.0,10.0\}$. For each $Q$, the resulting curve is monotone nonincreasing in $\lambda$ and levels off as $\lambda\to\infty$, converging to the highest safe altitude of the reach-avoid set defined in~\cite{mballo}, which uses the indicator function $\mathbf{1}_{\mathbb{C}_1\setminus\mathbb{C}_2}$ as the violation cost function.}
\label{convergence}
\end{figure}

\subsection{Parameter Selection via One-Dimensional Root Finding}

We apply Algorithm~\ref{alg:synthesis} to compute the parameter $\lambda$ in~\eqref{eq:penalty_exp_piecewise} to achieve a desired balance between operational capability and safety. To this end, we consider the performance and safety metrics $P,S:\;2^{\mathbb{R}^n}\to\mathbb{R}_{\ge 0}$ defined as
\begin{align}
P\!\big(\mathcal{RA}_{Q;\lambda}\big)
&\coloneqq \max_{(V,\gamma,h)\in \mathcal{RA}_{Q;\lambda}} \; h,\\[4pt]
S\!\big(\mathcal{RA}_{Q;\lambda}\big)
&\coloneqq \frac{\mu\!\big(\mathcal{RA}_{Q;\lambda}\cap(\mathbb{C}_1\setminus\mathbb{C}_2)\big)}
{\mu\!\big(\mathbb{C}_1\setminus\mathbb{C}_2\big)}.
\end{align}

The metric $P(\mathcal{RA}_{Q;\lambda})$ quantifies operational capability by measuring the maximum altitude from which a safe landing can be guaranteed. Larger values of $P$ correspond to a larger emergency landing flight envelope and a higher maximum safe altitude, and thus to greater operational capability. The metric $S(\mathcal{RA}_{Q;\lambda})$ quantifies safety by measuring the fraction of the degraded region $\mathbb{C}_1\setminus\mathbb{C}_2$ that is contained in the emergency landing flight envelope. Larger values of $S$ indicate greater tolerance for operation in degraded conditions, whereas smaller values correspond to increased conservativeness with respect to degraded operation. In particular, $S=0$ indicates that no states in $\mathbb{C}_1\setminus\mathbb{C}_2$ are included in $\mathcal{RA}_{Q;\lambda}$, whereas $S=1$ indicates that $\mathbb{C}_1\setminus\mathbb{C}_2$ is entirely contained in $\mathcal{RA}_{Q;\lambda}$.

We enforce the safety requirement $S(\mathcal{RA}_{Q;\lambda}) \le \bar{S}$ with $\bar{S}=0.35$, i.e., at most $35\%$ of the degraded region $\mathbb{C}_1\setminus\mathbb{C}_2$ is contained in the emergency landing flight envelope. We then solve the optimization problem~\eqref{eq:synthesis_opt} to maximize $P(\mathcal{RA}_{Q;\lambda})$ subject to this constraint.

% Figure~\ref{Param_Tuning} 
 Fig.~\ref{Param_Tuning} illustrates the successive iterates of Algorithm~\ref{alg:synthesis} in terms of the candidate parameter values $\lambda$ and the corresponding metrics $P(\mathcal{RA}_{Q;\lambda})$ and $S(\mathcal{RA}_{Q;\lambda})$. As $\lambda$ increases, both metrics are monotone nonincreasing, consistent with Assumption~\ref{assump:PS_properties}. The algorithm converges to the parameter value $\lambda=3.66$, which satisfies the prescribed safety requirement while maximizing performance. The corresponding performance and safety values are $P=33.46~\mathrm{m}$ and $S=0.35$, respectively.

From a control synthesis perspective, the resulting value \(\lambda = 3.66\) induces a set of robust, safe controllers that steer the aircraft to the landing zone set while guaranteeing that any operation in the degraded region is confined to a restricted subset whose volume is at most \(35\%\) of the total volume of that region.

\begin{figure}[!t]
    \centering
    \includegraphics[width=0.8\columnwidth]{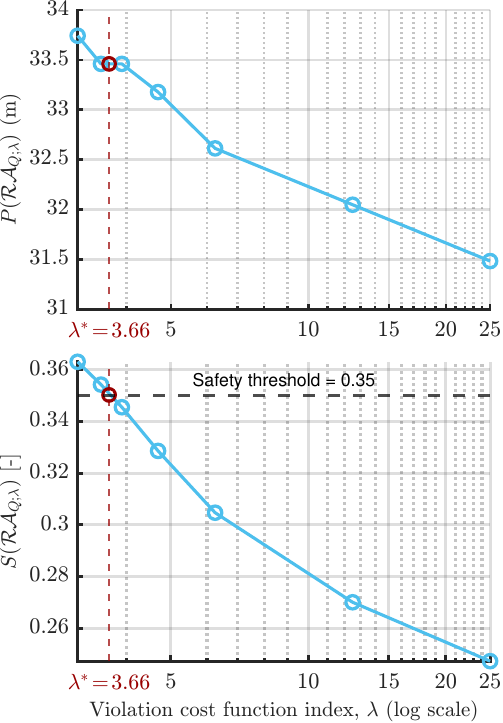}
    \caption{Parameter tuning via Algorithm~\ref{alg:synthesis}. The figure depicts the iterative update of the candidate parameter $\lambda$ and the corresponding values of the performance and safety metrics. Both metrics vary monotonically with $\lambda$, and the algorithm converges to the parameter value $\lambda=3.66$. At this value, the safety constraint is active (i.e., $S(\mathcal{RA}_{Q;\lambda})=\bar{S}=0.35$), yielding the performance-maximizing parameter subject to the prescribed safety requirement.}
    \label{Param_Tuning}
\end{figure}

\section{Conclusion}\label{sec:conclusion}
This paper develops a Hamilton–Jacobi reachability framework for computing emergency landing flight envelopes under a graded safety model, rather than a binary safe/unsafe formulation. The graded safety specification is captured by soft and hard safety constraints. The soft constraint is defined by a continuous, designer-specified violation cost function that encodes the severity of off-nominal operation and a finite violation budget $Q$, which bounds the allowable extent of off-nominal operation. The hard constraint prohibits operation in the critical regime.

Our main theoretical results establish that, for a broad class of state- and time-dependent violation cost functions, the emergency landing envelope varies monotonically with the cost assigned to off-nominal operation. Furthermore, for any parameterized family in this class, the envelope varies continuously with the cost parameter in the Hausdorff sense. These sensitivity results provide a principled link between safety conservatism and operational capability. They also preclude abrupt geometric changes in the envelope under small parameter variations. Building on these properties, we formulate a synthesis problem to select the parameter value that meets a prescribed safety requirement and solve it using a one-dimensional bracketing algorithm with guaranteed convergence.

Numerical experiments on a fixed-wing point-mass aircraft under propulsion failure with exogenous disturbances illustrate how the envelope shrinks as the cost parameter increases and characterize its asymptotic behavior as the parameter tends to infinity. The experiments also demonstrate algorithmic tuning of the cost parameter to enforce a prescribed constraint on degraded-regime operation, quantified by the envelope volume over the degraded region, while maximizing recoverable altitude.

Several directions merit further investigation. First, developing a systematic methodology for designing the violation cost function \(\ell\) for a given application is an important next step, since the choice of \(\ell\) determines how off-nominal operation is quantified and should ideally reflect application-specific notions of risk, performance degradation, or accumulated damage. Second, extending the synthesis algorithm to multi-parameter families (e.g., $\boldsymbol{\lambda}\in\mathbb{R}^K$) would enable richer structure in the violation cost function. Third, incorporating time-varying or state-dependent budgets $Q$ could better capture evolving system health and phase-dependent operating limits throughout the emergency landing maneuver. Fourth, studying alternative safety metrics based on worst-case trajectories (rather than set volume) would enable parameter tuning that is defined explicitly at the trajectory level. Finally, applying the framework to higher fidelity aircraft models remains an important direction. Since grid-based solutions of HJ equations scale poorly with state dimension, future work should investigate scalable approximation techniques, such as neural network-based value function representations, decomposition methods, or sampling-based approximations, to enable application to high dimensional aircraft models. Combining these scalable computational approaches with broader uncertainty descriptions, (e.g., parametric aerodynamic uncertainty and wind fields), would help evaluate computational feasibility in even more operationally realistic settings.

% \section*{Acknowledgments}
% This should be a simple paragraph before the References to thank those individuals and institutions who have supported your work on this article.

\appendices

\section{Proof of Proposition 1}\label{apx:prop1}
\begin{proof}
Pick \(Q\in [0, \infty)\). We show that: (i) \(x\!\in\!\mathcal{RA}_{Q;}\!\Rightarrow\! V(0,x,Q)\le 0\); 
(ii) \(V(0,x,Q)\!\le\!0\!\Rightarrow\! x \in\mathcal{RA}_{Q}\).\\
\textit{(i)} Suppose, for contradiction, that \(x \!\in\!\mathcal{RA}_{Q}\) and \(V(0,x,Q)>0\). Define the function \(C\) as follows 
\begin{equation}
\begin{aligned}[t]
& C(0,x,Q,\tau,\delta,\bm{a}) \!\!\coloneqq \!\!\max\Big\{ \!d_{\mathbb{C}_{2}}\!\big(\phi^{\bm{a},\delta[\bm{a}]}_{x}(\tau)\big),\; d_{\mathbb{T}}\!\big(\phi^{\bm{a},\delta[\bm{a}]}_{x}(\tau)\big),\\
&\hspace{1em} \int_0^\tau \ell\!\big(s,\phi^{\bm{a},\delta[\bm{a}]}_{x}(s)\big)\,ds - Q,\; \max_{s\in[0,\tau]} d_{\mathbb{C}_{1}}\!\big(\phi^{\bm{a},\delta[\bm{a}]}_{x}(s)\big)\Big\}.
\end{aligned}
\end{equation}
We then have
\begin{equation}
V(0,x,Q)
=\sup_{\delta \in \Delta}\ \inf_{\bm a \in \mathcal{A}}\ \min_{\tau \in [0,T]}
C(0,x,Q,\tau,\delta,\bm a) \,>\, 0.
\end{equation}
Therefore, there exist \(\epsilon>0\) and \(\bar{\delta}\in\Delta\) such that
\begin{equation}
\inf_{\bm a \in \mathcal{A}}\ \min_{\tau \in [0,T]}
C(0,x,Q,\tau,\bar{\delta},\bm a) \,>\, \epsilon,
\end{equation}
and hence, for all \(\bm a\in\mathcal{A}\) and all \(\tau\in[0,T]\),
\begin{equation}
C(0,x,Q,\tau,\bar{\delta},\bm a)>\epsilon.
\end{equation}

Given that \(x \in\mathcal{RA}_{Q;\,\lambda}\), for every \(\delta\in\Delta\) there exist \(\bm{a}_\delta\in\mathcal{A}\) and \(\tau_\delta\in[0,T]\) such that the trajectory \(\phi^{\bm{a}_\delta,\delta[\bm{a}_\delta]}_{x}(\cdot)\) satisfies the reach–avoid conditions in~\eqref{eq:RA_soft_time_dep_family}, and consequently \(C(0,x,Q,\tau_{\bar{\delta}},\bar{\delta},\bm{a}_{\bar{\delta}}) \le 0\). This contradicts the earlier conclusion that \(C(0,x,Q,\tau,\bar{\delta},\bm{a}) > \epsilon > 0\) for all \(\bm{a} \in \mathcal{A}\) and \(\tau \in [0,T]\). Therefore,  \(x \!\in\!\mathcal{RA}_{Q}\!\Rightarrow\! V(0,x,Q)\le 0\) \\

\textit{(ii)} Pick \(x\) such that \(V(0,x,Q) \le 0\) and suppose, for contradiction, that \(x \notin \mathcal{RA}_{Q}\). Then, by the assumption \(x \notin \mathcal{RA}_{Q}\), there exists \(\eta > 0\) and \(\tilde{\delta} \in \Delta\) such that \(\forall\, \bm{a} \in \mathcal{A},\; \forall\, \tau \in [0,T],\; C(0,x,Q,\tau,\tilde{\delta},\bm{a}) > \eta\).

Since \(V(0,x,Q) \le 0\), for all \(\epsilon > 0\) there exist \(\tilde{\bm{a}} \in \mathcal{A}\) and \(\tilde{\tau} \in [0,T]\) such that \(C(0,x,Q,\tilde{\tau},\delta,\tilde{\bm{a}}) \le \epsilon\). Letting \( \epsilon = \eta/2 \) and \( \delta = \tilde{\delta} \) leads to a contradiction. Therefore, \( V(0,x,Q) \leq 0 \;\Rightarrow\;  x \in\mathcal{RA}_{Q}\). 

\textit{(ii)} Pick \(x\) such that \(V(0,x,Q)\le 0\) and suppose, for contradiction, that \(x\notin\mathcal{RA}_Q\).
By the definition of \(\mathcal{RA}_Q\), there exist \(\eta>0\) and \(\tilde{\delta}\in\Delta\) such that
\begin{equation}\label{Ctd_Eq}
\inf_{\bm a\in\mathcal{A}}\ \min_{\tau\in[0,T]}
C(0,x,Q,\tau,\tilde{\delta},\bm a)\ \ge\ \eta .
\end{equation}
Since \(V(0,x,Q) \le 0\), for all \(\epsilon > 0\) there exist \(\tilde{\bm{a}} \in \mathcal{A}\) and \(\tilde{\tau} \in [0,T]\) such that \(C(0,x,Q,\tilde{\tau},\delta,\tilde{\bm{a}}) \le \epsilon\). Choosing \(\epsilon=\eta/2\) and \(\delta=\tilde{\delta}\) yields a contradiction with \eqref{Ctd_Eq}. Therefore, \(V(0,x,Q)\le 0\) implies \(x\in\mathcal{RA}_Q\).

From \textit{(i)} and \textit{(ii)}, we conclude that
\begin{equation}\label{eq:RA_def}
\mathcal{RA}_{Q}
= \bigl\{ x \in \mathbb{R}^n \;\big|\; V(0,x,Q) \le 0 \bigr\}.
\end{equation}
\end{proof}
\section{Proof of Theorem 2}\label{apx:Thm2}
\begin{proof}
\noindent\textbf{(i)}  It suffices to show that \( \mathcal{RA}_{Q;\lambda_1}\!\!\!\supseteq\!\!\mathcal{RA}_{Q;\lambda_2} \).
Fix \( 0\!\!\le\!\!\lambda_1\!\!<\!\!\lambda_2\!\!<\!\!\infty \), \( Q\in[0,\infty) \), and \( x\in\mathbb{R}^n \).

Let \(C_{\lambda_2}\) be defined analogously to \(C\) in the proof of Proposition 1, with \(\ell\) replaced by \(\ell_{\lambda_2}\).
If \(x \in \mathcal{RA}_{Q;\lambda_2}\), then there exist a control signal \(\bm a \in \mathcal{A}\) and a time \(\tau \in [0,T]\) such that
\begin{equation}
C_{\lambda_2}(0,x,Q,\tau,\delta,\bm a) \le 0 \quad \text{for all } \delta \in \Delta.
\end{equation}

\noindent In particular, for any \(s \in [0,\tau]\), we have
\(\phi^{\bm a,\delta[\bm a]}_{x}(s) \in \mathbb{C}_1\).
Since \(\ell_{\lambda_1}(s,y) \le \ell_{\lambda_2}(s,y)\) for all \(s \in [0,T]\) and \(y \in \mathbb{C}_1\), it follows that
\begin{equation}
\int_0^\tau \ell_{\lambda_{1}}\!\big(s,\phi^{\bm a,\delta[\bm a]}_{t,x}(s)\big)\,ds
\le
\int_0^\tau \ell_{\lambda_{2}}\!\big(s,\phi^{\bm a,\delta[\bm a]}_{t,x}(s)\big)\,ds
\le Q.
\end{equation}

\noindent Hence,
\begin{align}
V_{\lambda_1}(0,x,Q)
&=\sup_{\delta\in\Delta}\inf_{\bm a\in\mathcal{A}}\min_{\tau\in[0,T]}
C_{\lambda_1}(0,x,Q,\tau,\delta,\bm a) \nonumber\\
&\le
\sup_{\delta\in\Delta}\inf_{\bm a\in\mathcal{A}}\min_{\tau\in[0,T]}
C_{\lambda_2}(0,x,Q,\tau,\delta,\bm a)
\label{eq:lambda_monotonicity}\\
&= V_{\lambda_2}(0,x,Q).
\nonumber
\end{align}

\noindent Since \(Q\) is arbitrary, it follows that
\(\mathcal{RA}_{Q;\lambda_1} \supseteq \mathcal{RA}_{Q;\lambda_2}\) for all \(Q\ge 0\).\\
\noindent\textbf{(ii)} The proof follows the same steps as in \textbf{(i)} and is omitted for brevity.\\
\noindent\textbf{(iii)} Since $V_\lambda$ is Lipschitz continuous in $(t,x,Q)$, there exists $L_t>0$ such that $|V_\lambda(t,x,Q)-V_\lambda(s,x,Q)|\le L_t|t-s|$ for all $t,s\in[0,T]$ and all $x\in\mathbb{R}^n$. Hence $\{x:V_\lambda(0,x,Q)\le 0\}\subseteq
S_T\coloneqq\{x:V_\lambda(T,x,Q)\le L_tT\}$.
Because $S_T$ is contained in a compact ball $\mathbb{K}$, the zero sublevel set of $V_\lambda(0,\cdot,Q)$ is contained in the compact set $\mathbb{K}$.

Define the set value function \(
F:\Lambda \to 2^{K}\) 
\begin{equation}\label{set_value_func}
F(\psi)\;\coloneqq\;\mathcal{RA}_{Q;\psi}
\;.
\end{equation}
\noindent Let \(\psi \in \Lambda\) and consider any sequence \((\psi_k)_{k \ge 1} \subset \Lambda\) such that \(\psi_k \to \psi\).
For each \(k\), pick \(x_k \in F(\psi_k)\).
By compactness of \(\mathbb{K}\), there exists a subsequence \((x_{k_j})_{j \ge 1}\) such that \(x_{k_j} \to x\) for some \(x \in \mathbb{K}\). Since $\psi$ only affects the cost $\ell_\psi$ (and the system dynamics~\eqref{eq:ode}
are independent of $\psi$), for any $\psi,\tilde\psi\in\Lambda$ we have the bound
\begin{equation}\label{eq:W-psi-perturb}
\begin{aligned}
|V_\psi(t,x,Q)-V_{\tilde\psi}(t,x,Q)|
&\le T \sup_{(s,y)\in[0,T]\times\mathbb{R}^n}
|\Delta_{\psi,\tilde\psi}(s,y)|.
\end{aligned}
\end{equation}
where $\Delta_{\psi,\tilde\psi}(s,y) \coloneqq \ell_\psi(s,y)-\ell_{\tilde\psi}(s,y)$. Therefore, the uniform continuity of the map $\psi\mapsto \ell_\psi$ (with a modulus independent of $(t,x)$; see Definition~\ref{def:class_of_families})
implies that $\psi\mapsto V_\psi(t,x,Q)$ is uniformly continuous on $\Lambda$, uniformly over $(t,x,Q)\in[0,T]\times\mathbb{R}^n\times[0,\infty)$. Hence, we can write
\begin{equation}
\begin{aligned}
&\bigl|V_{\psi_{k_j}}(0,x_{k_j},Q)-V_{\psi}(0,x,Q)\bigr| \\
&\le \underbrace{\bigl|V_{\psi_{k_j}}(0,x_{k_j},Q)-V_{\psi_{k_j}}(0,x,Q)\bigr|}_{\le L\,(\|x_{k_j}-x\|)} \\[-0.25em]
&\quad+\underbrace{\bigl|V_{\psi_{k_j}}(0,x,Q)-V_{\psi}(0,x,Q)\bigr|}_{\le \omega(|\psi_{k_j}-\psi|)}
\;\xrightarrow[j\to\infty]{}\;0.
\end{aligned}
\end{equation}
where \(L\) is the Lipschitz constant of \(V_{\psi}\) in \(x\) and
\(\omega\) is the modulus of continuity for the map \(\psi\mapsto V_\psi(t,x,Q)\). Therefore
\begin{equation}
\lim_{j\to\infty} V_{\psi_{k_j}}(0,x_{k_j},Q) \;=\; V_{\psi}(0,x,Q).
\end{equation}

\noindent Since \!\(V_{\psi_{k_j}}(0,x_{k_j},Q)\le \! 0\) for all \(j\), passing to the limit yields \(V_{\psi}(0,x,Q)\le 0\). Hence $x\in F(\psi)$. Consequently, every limit point of sequences $x_{k}\in F(\psi_k)$ with $\psi_k\to\psi$ belongs to $F(\psi)$. Therefore $F$ is upper hemicontinuous at $\psi$. Since $\psi\in\Lambda$ was arbitrary, we conclude that $F$ is upper hemicontinuous on $\Lambda$.

Now we prove lower hemicontinuity of $F$ on $\Lambda$.
Let $\phi \in \Lambda$ and let $\{\psi_j\}_{j\geq 1} \subset \Lambda$
be any sequence such that $\psi_j \to \phi$. Pick any $x \in F(\phi)$. 

For any $j \ge 1$, we have
\begin{equation}\label{eq:W-modulus-lambda}
\bigl|V_{\psi_j}(0,x,Q) - V_{\phi}(0,x,Q)\bigr|
\le \omega\!\bigl(|\psi_j - \phi|\bigr).
\end{equation}

\noindent Since $V_{\phi}(0,\cdot,Q)$ is continuous in $x$ and $x\in F(\phi)$, there exists a sequence $\{x_j\}_{j\geq 1}$ with $x_j \to x$ such that
\begin{equation}\label{eq:Wphi-xj-neg}
V_{\phi}(0,x_j,Q) \le -\frac{1}{j}\qquad \text{for all } j\ge 1.
\end{equation}
Using \eqref{eq:W-modulus-lambda} evaluated at $x_j$, we obtain
\begin{equation}\label{eq:W-triangle}
V_{\psi_j}(0,x_j,Q)
\le
V_{\phi}(0,x_j,Q)
+
\omega\!\bigl(|\psi_j-\phi|\bigr).
\end{equation}
\noindent Since $\omega\!\bigl(|\psi_j-\phi|\bigr) \to 0$ as $j\to\infty$, there exists a subsequence $\{\psi_{k_j}\}_{j\geq 1}$ such that
\begin{equation}\label{eq:omega-subseq}
\omega\!\bigl(|\psi_{k_j}-\phi|\bigr) \le \frac{1}{j}
\qquad \text{for all } j\ge 1.
\end{equation}

\noindent From~\eqref{eq:Wphi-xj-neg}, \eqref{eq:W-triangle}, and \eqref{eq:omega-subseq}, it follows that
\begin{equation}\label{eq:W-tri}
V_{\psi_{k_j}}(0,x_j,Q)
\le
-\frac{1}{j} + \frac{1}{j}
= 0.
\end{equation}

\noindent Equation~\eqref{eq:W-tri} shows that $V_{\psi_{k_j}}(0,x_j,Q)\le 0$ for all $j\ge 1$. This implies that $x_j \in F(\psi_{k_j})$ for all $j\ge 1$.
Define $x_{k_j} \coloneqq x_j$. Then $x_{k_j}\in F(\psi_{k_j})$ for all $j\ge 1$, and since $x_j\to x$, we also have \(x_{k_j}\) converges to  \(x\) as \(j\) goes to \(\infty. \)
Therefore, for every sequence $\psi_j\to \phi$ and every $x\in F(\phi)$, there exists a subsequence $\psi_{k_j}$ and a sequence
$x_{k_j}\in F(\psi_{k_j})$ such that $x_{k_j}\to x$. This proves that $F$ is lower hemicontinuous at $\phi$.
Since $\phi\in\Lambda$ was arbitrary, $F$ is lower hemicontinuous on $\Lambda$.

%%%%%%%%%%%%%%%%%%%%%%%%%%%%%%%%%%%%%%%%%%%%%%%%%%%%%%%%%%%%%%%%%%%%%%%%%%%%%%%%%%%%%%%%%%%%%%%%%%%%%%%%%%%%%%%%%%%%%%%%%%%%%%%%%%%%%%%%%%
\medskip
let $\lambda\in\Lambda$ be arbitrary and let $\{\lambda_k\}_{k\geq 1}\subset\Lambda$ satisfy $\lambda_k\to\lambda$.
Upper hemicontinuity of $F$ at $\lambda$ gives
\begin{equation}\label{eq:uhc}
\limsup_{k\to\infty} F(\lambda_k) \;\subseteq\; F(\lambda),
\end{equation}
while lower hemicontinuity of $F$ at $\lambda$ gives
\begin{equation}\label{eq:lhc}
F(\lambda) \;\subseteq\; \liminf_{k\to\infty} F(\lambda_k).
\end{equation}
Since $\liminf_{k\to\infty} F(\lambda_k)\subseteq \limsup_{k\to\infty} F(\lambda_k)$, it follows from
\eqref{eq:uhc}-\eqref{eq:lhc} that
\begin{equation}\label{eq:pk-equality}
\liminf_{k\to\infty} F(\lambda_k)
\;=\;
\limsup_{k\to\infty} F(\lambda_k)
\;=\;
F(\lambda).
\end{equation}

\noindent Under the assumptions stated in this paper, $F(\lambda_k)$ and $F(\lambda)$ are nonempty and compact for all $k$ (details omitted for brevity). Then the Painlev\'e-Kuratowski convergence in \eqref{eq:pk-equality} is equivalent to Hausdorff convergence, and hence
\begin{equation}\label{eq:hausdorff}
d_H\!\bigl(F(\lambda_k),F(\lambda)\bigr)\xrightarrow[k\to\infty]{}0.
\end{equation}

\end{proof}

\section{Proof of Theorem 3}\label{prop:penalty_function1a}

\begin{proof}
Let~\(V_{{\infty}}\) denote the value function with the same structure as \(V_\lambda\) in~\eqref{eq:lambda_monotonicity}, but with cost function~\(\ell_{\infty}\). Let \(\mathcal{RA}_{Q;{\infty}}\)~denote the associated soft-constrained reach-avoid set.

\noindent\textit{(a) Set inclusion.}  
As shown in the proof of Theorem~2 in Appendix~\ref{apx:Thm2}, for any \(\lambda \in [0,\infty)\), if \(\ell_{\lambda}(s,x) \leq \ell_{\infty}(s,x)\) for all \(s\in[0,T]\) and \(x\in\mathbb{C}_{1}\), then
\begin{equation}\label{eq:RA_inclusion}
\mathcal{RA}_{Q;\lambda} \supseteq \mathcal{RA}_{Q;{\infty}}.
\end{equation}

\noindent\textit{(b) Convergence in measure.}  
Fix \(Q \in [0,\infty)\). Let \(\{\ell_\lambda\}_{\lambda\in\Lambda}\in\mathcal{L}\) satisfy
\(\ell_\lambda(t,x)\to \ell_{\infty}(t,x)\) as \(\lambda\to+\infty\). Under the assumptions of Section~\ref{sec:problem}, \(V_{\lambda}(0,x,Q)\) converges pointwise on \(\mathbb{R}^n\) to \(V_{{\infty}}(0,x,Q)\)~\cite{mballo}.

Take any increasing sequence of cost functions from this family,
\((\ell_{\lambda_k})_{k\ge1}\), where \(\lambda_k\in\Lambda\) and \(\lambda_{k+1}\ge \lambda_k\). Let \(F:\Lambda \to 2^{\mathbb{K}}\) be the set-valued map defined in~\eqref{set_value_func}.

\noindent Since \(F(\lambda_{k+1})\subseteq F(\lambda_k)\) and \(\mu(\mathbb K)<\infty\), the continuity from above of the Lebesgue measure yields
\begin{equation}\label{eq:measure_limit}
\lim_{k\to\infty} \mu(F(\lambda_k)) \;=\; \mu\!\left(\bigcap_{k=1}^{\infty} F(\lambda_k)\right).
\end{equation}

Take any state $x\in \mathcal{RA}_{Q;{\infty}}$. By the definition of $\mathcal{RA}_{Q;{\infty}}$,
\begin{equation}
V_{{\infty}}(0,x,Q)\le 0.
\end{equation}

\noindent Following the argument in the proof of Theorem~2 (see \eqref{eq:lambda_monotonicity}), for each \(k\) we have
\begin{equation}
V_{\lambda_k}(0,y,Q)\le V_{{\infty}}(0,y,Q)
\qquad \text{for all } y \in \mathbb{R}^n .
\end{equation}

\noindent Therefore,
\begin{equation}
V_{\lambda_k}(0,x,Q)\le 0
\qquad \text{for every }k.
\end{equation}
Hence $x\in F(\lambda_k)$ for every $k$, and thus
\begin{equation}
x\in \bigcap_{k=1}^\infty F(\lambda_k).
\end{equation}
This proves
\begin{equation}\label{concl_1}
\mathcal{RA}_{Q;{\infty}}\subseteq \bigcap_{k=1}^\infty F(\lambda_k).
\end{equation}

\medskip
Now take $x\in \bigcap_{k=1}^\infty F(\lambda_k)$. Then $x\in F(\lambda_k)$ for every $k$.
By the definition of $F(\lambda_k)$,
\begin{equation}\label{W_k}
V_{\lambda_k}(0,x,Q)\le 0
\qquad \text{for every }k.
\end{equation}
Since \(V_{\lambda_k}(0,x,Q)\to V_{{\infty}}(0,x,Q)\) pointwise as \(k\to\infty\), letting \(k\to\infty\) in~\eqref{W_k} yields the inequality:
\begin{equation}
V_{{\infty}}(0,x,Q)\le 0.
\end{equation}
Therefore $x\in \mathcal{RA}_{Q;{\infty}}$. This proves
\begin{equation}\label{concl_2}
\bigcap_{k=1}^\infty F(\lambda_k)\subseteq \mathcal{RA}_{Q;{\infty}}.
\end{equation}

\medskip
\noindent Combining~\eqref{concl_1} and~\eqref{concl_2} yields
\begin{equation}\label{concl}
\bigcap_{k=1}^\infty F(\lambda_k) = \mathcal{RA}_{Q;{\infty}}.
\end{equation}

\noindent From~\eqref{eq:measure_limit} and~\eqref{concl}, we obtain
\begin{equation}\label{eq:final_limit}
\lim_{k\to\infty} \mu(F(\lambda_k)) = \mu(\mathcal{RA}_{Q;\,\ell_{\infty}}).
\end{equation}

\noindent Since \(\mathcal{RA}_{Q;{\infty}}\subseteq F(\lambda_k)\) for all \(k\), we have
\(F(\lambda_k)\triangle \mathcal{RA}_{Q;{\infty}}=F(\lambda_k)\setminus \mathcal{RA}_{Q;{\infty}}\), and therefore
\begin{equation}\label{eq:measure_gap_RA}
\mu\!\left(F(\lambda_k)\triangle \mathcal{RA}_{Q;{\infty}}\right)
=
\mu\!\left(F(\lambda_k)\right)
-
\mu\!\left(\mathcal{RA}_{Q;{\infty}}\right).
\end{equation}
Consequently,
\begin{equation}\label{eq:symmetric_difference}
\mu\!\left(\mathcal{RA}_{Q;\lambda_k}\triangle \mathcal{RA}_{Q;{\infty}}\right)
\;\xrightarrow[k\to\infty]{}\;0.
\end{equation}
\end{proof}

\bibliographystyle{IEEEtran}
\bibliography{refs}

@article{c1,
  author  = {Wabersich, K. P. and Taylor, A. J. and Choi, J. J. and Sreenath, K. and Tomlin, C. J. and Ames, A. D. and Zeilinger, M. N.},
  title   = {Data-driven safety filters: Hamilton--Jacobi reachability, control barrier functions, and predictive methods for uncertain systems},
  journal = {IEEE Control Systems Magazine},
  volume  = {43},
  number  = {5},
  pages   = {137--177},
  month   = oct,
  year    = {2023}
}

@article{c2,
  author  = {Ames, A. D. and Xu, X. and Grizzle, J. W. and Tabuada, P.},
  title   = {Control barrier function based quadratic programs for safety critical systems},
  journal = {IEEE Transactions on Automatic Control},
  volume  = {62},
  number  = {8},
  pages   = {3861--3876},
  month   = aug,
  year    = {2017}
}

@article{Tekles2017,
  author  = {Tekles, N. and Chongvisal, J. and Xargay, E. and Choe, R. and Talleur, D. A. and Hovakimyan, N. and Belcastro, C. M.},
  title = {Design of a flight envelope protection system for {NASA}'s transport class model},
  journal = {Journal of Guidance, Control, and Dynamics},
  volume  = {40},
  number  = {4},
  pages   = {863--877},
  year    = {2017}
}

@inproceedings{Tang2009,
  author    = {Tang, L. and Roemer, M. and Ge, J. and Crassidis, A. and Prasad, J. V. R. and Belcastro, C.},
  title     = {Methodologies for adaptive flight envelope estimation and protection},
  booktitle = {Proceedings of the AIAA Guidance, Navigation, and Control Conference},
  address   = {Chicago, IL, USA},
  month     = aug,
  year      = {2009}
}

@article{Yavrucuk2009,
  author  = {Yavrucuk, I. and Prasad, J. V. R. and Unnikrishnan, S.},
  title   = {Envelope protection for autonomous unmanned aerial vehicles},
  journal = {Journal of Guidance, Control, and Dynamics},
  volume  = {32},
  number  = {1},
  pages   = {248--261},
  year    = {2009}
}

@article{Akametalu,
  author  = {Akametalu, A. K. and Tomlin, C. J. and Chen, M.},
  title   = {Reachability-based forced landing system},
  journal = {Journal of Guidance, Control, and Dynamics},
  volume  = {41},
  number  = {12},
  pages   = {2529--2542},
  year  = {2018}  
}

@article{c3,
  author  = {Mitchell, I. M. and Bayen, A. M. and Tomlin, C. J.},
  title = {A time-dependent {Hamilton--Jacobi} formulation of reachable sets for continuous dynamic games},
  journal = {IEEE Transactions on Automatic Control},
  volume  = {50},
  number  = {7},
  pages   = {947--957},
  month   = jul,
  year    = {2005}
}

@article{MargellosLygeros2011,
  author  = {Margellos, K. and Lygeros, J.},
  title = {{Hamilton--Jacobi} formulation for reach--avoid differential games},
  journal = {IEEE Transactions on Automatic Control},
  volume  = {56},
  number  = {8},
  pages   = {1849--1861},
  month   = aug,
  year    = {2011}
}

@article{mballo,
  author  = {Mballo, Chams Eddine and Lee, D. and Tomlin, C. J.},
  title   = {{A Hamilton--Jacobi} Reachability Framework with Soft Constraints for Safety-Critical Systems},
  journal = {arXiv preprint arXiv:2510.24933},
  year    = {2025},
  note    = {Under review for IEEE Transactions on Automatic Control}
}

@article{bayen,
  author  = {Bayen, A. M. and Mitchell, I. M. and Oishi, M. M. K. and Tomlin, C. J.},
  title   = {Aircraft autolander safety analysis through optimal-control-based reach-set computation},
  journal = {Journal of Guidance, Control, and Dynamics},
  volume  = {30},
  number  = {1},
  pages   = {68--77},
  month   = jan,
  year    = {2007}
}

@article{Fisac,
  author  = {Fisac, J. F. and Akametalu, A. K. and Zeilinger, M. N. and Kaynama, S. and Gillula, J. and Tomlin, C. J.},
  title   = {A general safety framework for learning-based control in uncertain robotic systems},
  journal = {IEEE Transactions on Automatic Control},
  volume  = {64},
  number  = {7},
  pages   = {2737--2752},
  month   = jul,
  year    = {2019}
}

@inproceedings{Lombaerts2013,
  author    = {Lombaerts, T. and Schuet, S. and Wheeler, K. R. and Acosta, D. and Kaneshige, J.},
  title     = {Robust maneuvering envelope estimation based on reachability analysis in an optimal control formulation},
  booktitle = {Proceedings of the 3rd International Conference on Control and Fault-Tolerant Systems (SysTol)},
  address   = {Nice, France},
  month     = oct,
  year      = {2013},
  pages     = {1--8}
}

@article{Nabi2018,
  author  = {Nabi, H. N. and Lombaerts, T. and Zhang, Y. and van Kampen, E. and Chu, Q. P. and de Visser, C. C.},
  title   = {Effects of structural failure on the safe flight envelope of aircraft},
  journal = {Journal of Guidance, Control, and Dynamics},
  volume  = {41},
  number  = {6},
  pages   = {1257--1275},
  month   = jun,
  year    = {2018}
}

@article{Schuet2017,
  author  = {Schuet, S. and Lombaerts, T. and Acosta, D. and Kaneshige, J. and Wheeler, K. and Shish, K.},
  title   = {Autonomous flight envelope estimation for loss-of-control prevention},
  journal = {Journal of Guidance, Control, and Dynamics},
  volume  = {40},
  number  = {4},
  pages   = {847--862},
  year    = {2017}
}

@article{Hsu,
  author  = {Hsu, T. and Choi, J. J. and Amin, D. and Tomlin, C. J. and MacWherter, S. C. and Piedmonte, M.},
  title = {Towards flight envelope protection for the {NASA} tilt-wing {eVTOL} flight-mode transition using {Hamilton--Jacobi} reachability},
  journal = {Journal of the American Helicopter Society},
  volume  = {69},
  number  = {2},
  pages   = {1--18},
  month   = apr,
  year    = {2024}
}

@book{altman,
  author    = {Altman, Eitan},
  title     = {Constrained Markov Decision Processes},
  publisher = {CRC Press},
  address   = {Boca Raton, FL},
  year      = {1999}
}

@article{Russel2020,
  author  = {Russel, R. H. and Benosman, M. and van Baar, J.},
  title = {Robust constrained {MDPs}: Soft-constrained robust policy optimization under model uncertainty},
  journal = {arXiv preprint arXiv:2010.04870},
  year    = {2020}
}

@article{zelinger,
 author  = {Zeilinger, M. N. and Morari, M. and Jones, C. N.},
  title   = {Soft constrained model predictive control with robust stability guarantees},
  journal = {IEEE Transactions on Automatic Control},
  volume  = {59},
  number  = {5},
  pages   = {1190--1202},
  month   = may,
  year    = {2014}
}

@article{wabersich,
  author  = {Wabersich, K. P. and Krishnadas, R. and Zeilinger, M. N.},
  title   = {A soft constrained MPC formulation enabling learning from trajectories with constraint violations},
  journal = {IEEE Control Systems Letters},
  volume  = {6},
  pages   = {980--985},
  year    = {2022}
}

@inproceedings{lee,
  author    = {Lee, J. and Kim, J. and Ames, A. D.},
  title     = {Hierarchical relaxation of safety-critical controllers: Mitigating contradictory safety conditions with application to quadruped robots},
  booktitle = {Proceedings of the IEEE/RSJ International Conference on Intelligent Robots and Systems (IROS)},
  address   = {Detroit, MI, USA},
  month     = oct,
  year      = {2023},
  pages     = {2384--2391}
}

@inproceedings{xiao,
  author    = {Xiao, W. and Mehdipour, N. and Collin, A. and Bin-Nun, A. Y. and Frazzoli, E. and Tebbens, R. D. and Belta, C.},
  title     = {Rule-based optimal control for autonomous driving},
  booktitle = {Proceedings of the ACM/IEEE 12th International Conference on Cyber-Physical Systems (ICCPS)},
  address   = {Nashville, TN, USA},
  month     = may,
  year      = {2021},
  pages     = {143--154}
}

@inproceedings{Chow2017,
  author    = {Chow, Y. and Ghavamzadeh, M. and Janson, L. and Pavone, M.},
  title     = {Risk-Constrained Reinforcement Learning with Percentile Risk Criteria},
  booktitle = {Proceedings of the 34th International Conference on Machine Learning},
  series    = {Proceedings of Machine Learning Research},
  volume    = {70},
  address   = {Sydney, Australia},
  year      = {2017},
  pages     = {607--616}
}

@inproceedings{Tessler2019,
  author    = {Tessler, C. and Mankowitz, D. J. and Mannor, S.},
  title     = {Reward Constrained Policy Optimization},
  booktitle = {Proceedings of the AAAI Conference on Artificial Intelligence},
  address   = {Honolulu, HI, USA},
  year      = {2019},
  pages     = {5020--5027}
}

@inproceedings{mpc_soft_constraints1,
  author    = {Kerrigan, E. C. and Maciejowski, J. M.},
  title     = {Soft constraints and exact penalty functions in model predictive control},
  booktitle = {Proceedings of the UKACC International Conference on Control},
  address   = {Cambridge, U.K.},
  month     = sep,
  year      = {2000},
  pages     = {2319--2327}
}

@inproceedings{mpc_soft_constraints2,
  author    = {Oravec, J. and Bako{\v{s}}ov{\'a}, M.},
  title     = {Soft constraints in the robust MPC design via LMIs},
  booktitle = {Proceedings of the American Control Conference},
  address   = {Boston, MA, USA},
  month     = jul,
  year      = {2016},
  pages     = {3588--3593}
}

@article{mpc_soft_constraints5,
  author  = {Richards, A.},
  title   = {Fast model predictive control with soft constraints},
  journal = {European Journal of Control},
  volume  = {25},
  pages   = {51--59},
  month   = apr,
  year    = {2015}
}

@book{c16,
  author    = {Coddington, E. A. and Levinson, N.},
  title     = {Theory of Ordinary Differential Equations},
  publisher = {Krieger},
  address   = {Malabar, FL, USA},
  year      = {1984}
}

@article{crandall1983,
  author  = {Crandall, M. G. and Lions, P.-L.},
  title   =  {Viscosity solutions of {Hamilton--Jacobi} equations},
  journal = {Transactions of the American Mathematical Society},
  volume  = {277},
  number  = {1},
  pages   = {1--42},
  month   = may,
  year    = {1983}
}

@misc{FAR1990,
  author       = {{Federal Aviation Administration}},
  title        = {{14 CFR \S{}25.125: Landing}},
  howpublished = {Code of Federal Regulations, Title 14, Part 25},
  note         = {Airworthiness Standards: Transport Category Airplanes},
  year         = {2026}
}

@article{Evans,
  author  = {Evans, L. C. and Souganidis, P. E.},
  title = {Differential games and representation formulas for solutions of {Hamilton--Jacobi--Isaacs} equations},
  journal = {Indiana University Mathematics Journal},
  volume  = {33},
  number  = {5},
  pages   = {773--797},
  year    = {1984}
}

@misc{Toolbox_2,
  author       = {Chen, M. and Herbert, S. and Bansal, S. and Tomlin, C. J.},
  title        = {Optimal Control Helper Toolbox},
  howpublished = {\url{https://github.com/HJReachability/helperOC}},
  year         = {2023}
}

@article{Mitchell2008ToolboxLS,
  author  = {Mitchell, Ian M.},
  title   = {The Flexible, Extensible and Efficient Toolbox of Level Set Methods},
  journal = {Journal of Scientific Computing},
  volume  = {35},
  pages   = {300--329},
  year    = {2008}
}

@article{Chandrupatla1997,
  author  = {Chandrupatla, T. R.},
  title   = {A new hybrid quadratic/bisection algorithm for finding the zero of a nonlinear function without using derivatives},
  journal = {Advances in Engineering Software},
  volume  = {28},
  number  = {3},
  pages   = {145--149},
  year    = {1997}
}

@incollection{Walgemoed2005FlightEnvelope,
  author    = {Walgemoed, H.},
  title     = {Flight Envelope},
  booktitle = {Introduction to Flight Test Engineering},
  editor    = {Stoliker, F. N.},
  series    = {RTO AGARDograph 300, Flight Test Techniques Series},
  volume    = {14},
  chapter   = {12},
  publisher = {North Atlantic Treaty Organisation, Research and Technology Organisation},
  year      = {2005},
}

@article{Unnikrishnan2011ReactionaryEnvelopeProtection,
  author  = {Unnikrishnan, Suraj and Prasad, J. V. R. and Yavrucuk, Ilkay},
  title   = {Flight Evaluation of a Reactionary Envelope Protection System for UAVs},
  journal = {Journal of the American Helicopter Society},
  volume  = {56},
  number  = {1},
  year    = {2011},
  doi     = {10.4050/JAHS.56.012009}
}

@techreport{Whalley1994,
  author      = {Whalley, M. S.},
  title       = {A Piloted Simulation Investigation of Helicopter Limit Cueing},
  institution = {U.S. Army Aviation and Troop Command, Aeroflightdynamics Directorate},
  number      = {USAATCOM Technical Report 94-A-020},
  address     = {Moffett Field, CA},
  month       = oct,
  year        = {1994}
}

@techreport{WhalleyHindsonThiers2000,
  author      = {Whalley, M. S. and Hindson, B. and Thiers, G.},
  title       = {Comparison of Active Sidestick and Conventional Inceptors for Helicopter Flight Envelope Tactile Cueing},
  institution = {NASA Ames Research Center},
  address     = {Moffett Field, CA},
  month       = jan,
  year        = {2000}
}

@article{Falkena2011,
  author  = {Falkena, W. and Borst, C. and Chu, Q. P. and Mulder, J. A.},
  title   = {Investigation of Practical Flight Envelope Protection Systems for Small Aircraft},
  journal = {Journal of Guidance, Control, and Dynamics},
  volume  = {34},
  number  = {4},
  pages   = {976--988},
  month   = aug,
  year    = {2011},
  doi     = {10.2514/1.53000}
}

@inproceedings{Yavrucuk2002AdaptiveLimitMarginCueing,
  author    = {Yavrucuk, Ilkay and Prasad, J. V. R.},
  title     = {Adaptive Limit Margin Prediction and Control Cueing for Carefree Maneuvering of VTOL Aircraft},
  booktitle = {Proceedings of the AHS Flight Controls and Crew System Design Technical Specialists' Meeting},
  address   = {Philadelphia, PA},
  month     = oct,
  year      = {2002}
}

@techreport{Lotterio2022AdvancedFlightControls,
  author      = {Lotterio, Marco},
  title       = {Develop a Method of Compliance to Support Certification of Advanced Flight Controls in General Aviation and Hybrid Vehicles},
  institution = {National Test Pilot School},
  number      = {DOT/FAA/TC-21/6},
  address     = {Atlantic City International Airport, NJ},
  month       = feb,
  year        = {2022},
  doi         = {10.21949/1524486}
}

@inproceedings{Klyde2020HQTE,
  author    = {Klyde, David H. and Schulze, P. Chase and Mitchell, David G. and Sizoo, David and Schaller, Ross and McGuire, Robert},
  title     = {Mission Task Element Development Process: An Approach to {FAA} Handling Qualities Certification},
  booktitle = {AIAA Aviation Forum},
  year      = {2020},
  doi       = {10.2514/6.2020-3285}
}

\begin{IEEEbiography}[{\includegraphics[width=1in,height=1.25in,clip,keepaspectratio]{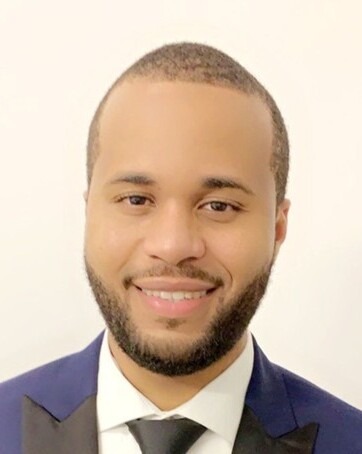}}]{Chams Eddine Mballo}
received the Ph.D. degree in aerospace engineering (2022) and the M.S. degree in mathematics (2021) from the Georgia Institute of Technology, Atlanta. He is currently a Postdoctoral Fellow at the University of California, Berkeley. His research focuses on safety-critical control, reachability analysis, human–machine interaction, and sustainable aviation for next-generation electric aircraft.
\end{IEEEbiography}

\begin{IEEEbiography}[{\includegraphics[width=1in,height=1.25in,clip,keepaspectratio]{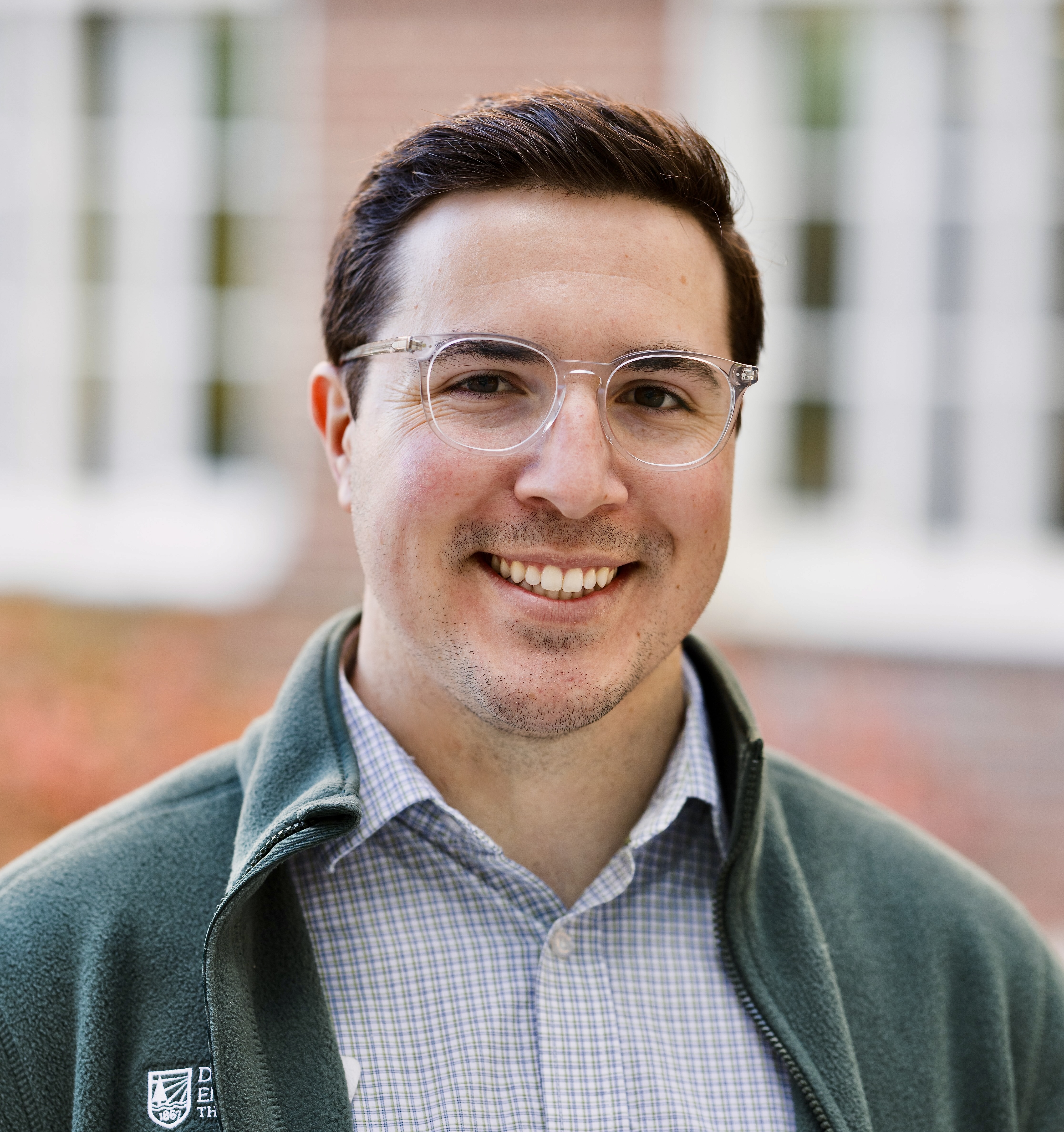}}]{Bryce L. Ferguson}
is an Assistant Professor in the Thayer School of Engineering at Dartmouth College, Hanover, NH, USA, where he leads the MADCAT Lab. He received the A.A. degree in mathematics from Santa Rosa Junior College, Santa Rosa, CA, USA, in 2016, and the B.S. and M.S. degrees in electrical engineering from the University of California, Santa Barbara, CA, USA, in 2018 and 2020, respectively. He received the Ph.D. degree in electrical and computer engineering from the University of California, Santa Barbara in 2024. He was a Postdoctoral Researcher at the University of California, Berkeley in 2024-2025. His research interests include game-theoretic and control-theoretic methods for describing and controlling both societal and engineered multiagent systems.
\end{IEEEbiography}

\begin{IEEEbiography}[{\includegraphics[width=1in,height=1.25in,clip,keepaspectratio]{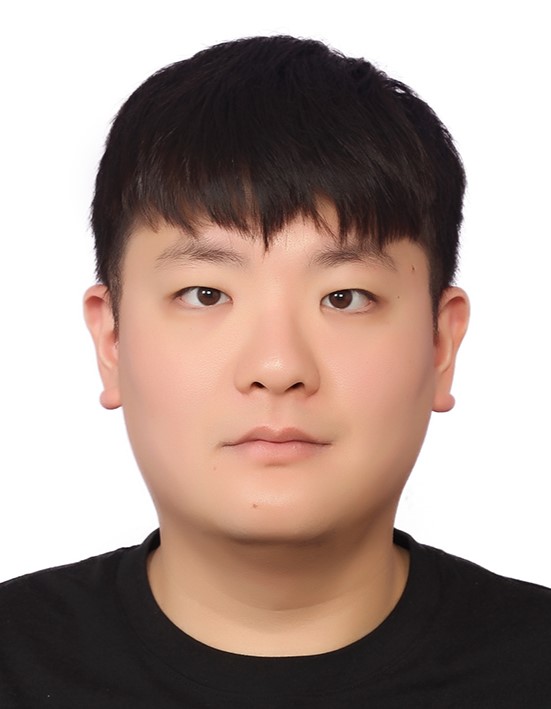}}]{Inkyu Jang}
(Graduate Student Member, IEEE) received the B.S. degree in mechanical engineering from Seoul National University, Seoul, Korea, in 2020. He is currently a Ph.D. candidate in Aerospace Engineering at Seoul National University, Seoul, Korea. His research interests include safety-critical control and stochastic control and their connection with machine learning, with applications to robotics.
\end{IEEEbiography}

\begin{IEEEbiography}[{\includegraphics[width=1in,height=1.25in,clip,keepaspectratio]{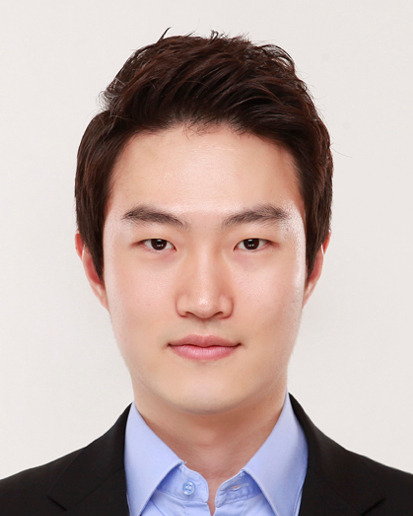}}]{Donggun Lee}
received the Ph.D. degree in mechanical engineering from the University of California, Berkeley, CA, USA. He was a Postdoctoral Associate with the Massachusetts Institute of Technology (MIT), Cambridge, MA, USA. He is currently an Assistant Professor with the Department of Mechanical and Aerospace Engineering, North Carolina State University (NCSU), Raleigh, NC, USA. His research interests include control theory and machine learning, with applications to robotics and vehicles.
\end{IEEEbiography}

\begin{IEEEbiography}[{\includegraphics[width=1in,height=1.25in,clip,keepaspectratio]{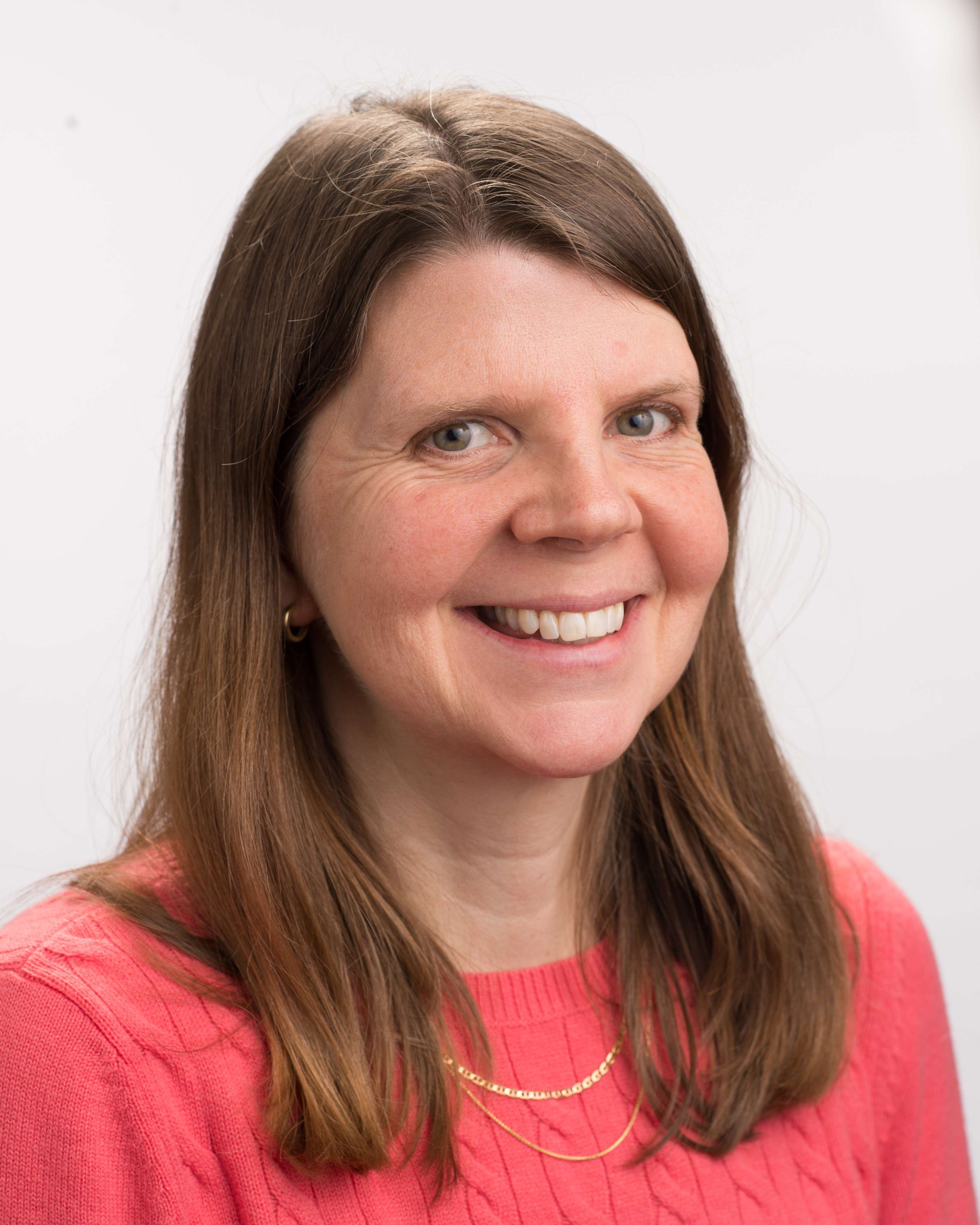}}]{Claire J. Tomlin}
(Fellow, IEEE) received the Ph.D. degree in electrical engineering and computer science from the University of California, Berkeley, CA, USA, in 1998. She is the James and Katherine Lau Professor of Engineering and the Professor and Chair with the Department of Electrical Engineering and Computer Sciences at UC Berkeley, Berkeley, CA, USA. From 1998 to 2007, she was an Assistant, Associate, and Full Professor in Aeronautics and Astronautics at Stanford University, Stanford, CA. In 2005, she joined UC Berkeley. Her research interests include control theory and hybrid systems, with applications to air traffic management, UAV systems, energy, robotics, and systems biology.

Prof. Tomlin is a MacArthur Foundation Fellow (2006). She was the recipient of the IEEE Transportation Technologies Award in 2017. In 2019, she was elected to the National Academy of Engineering and the American Academy of Arts and Sciences.
\end{IEEEbiography}

\vfill

\end{document}